\documentclass[preprint,12pt]{article}
\usepackage{bbm}
\usepackage{amsfonts}
\usepackage{mathrsfs}
\usepackage{amssymb}
\usepackage{amsmath}
\usepackage{amsthm}
\usepackage{color}
\usepackage{url}
\usepackage{hyperref}
\usepackage{multirow}
\usepackage{bbding}
\usepackage[ruled,vlined]{algorithm2e}
\usepackage{latexsym}
\usepackage[final]{graphicx}
\usepackage{cases}
\usepackage{amsthm}

\usepackage{lastpage}
\usepackage{epsfig}
\usepackage[misc]{ifsym}
\usepackage{lineno}

\newtheorem{definition}{Definition}
\newtheorem{theorem}{Theorem}
\newtheorem{lemma}{Lemma}
\newtheorem{remark}{Remark}

\begin{document}
\begin{center}

\textbf{\Large{Novel CRT-based Asymptotically Ideal Disjunctive Hierarchical Secret Sharing Scheme}}

\vspace{0.7cm}

Hongju Li$^{1,2}$, Jian Ding$^{1,3}$ $\mbox{(\Letter)}$, Fuyou Miao$^{2}$, Cheng Wang$^{3}$, Cheng Shu$^{3}$

\vspace{3mm}
\footnotesize{
$^1$School of Mathematics and Big Data, Chaohu University, Hefei 238024, China\\
$^2$School of Computer Science and Technology, University of Science and Technology of China, Hefei 230026, China\\
$^3$School of Artificial Intelligence and Big Data, Hefei University, Hefei 230000, China}

\footnotetext{\footnotesize {This research was supported by Research Project of Chaohu University under Grants No.
KYQD-202220, No. hxkt20250173, No. hxkt20250174, and University Natural Science Research Project of Anhui Province under Grant No. 2024AH051324.}}

\footnotetext{\footnotesize {\noindent$\mbox{\Letter}$ Corresponding author.}}

\footnotetext{\footnotesize {E-mail addresses: 058072@chu.edu.cn (H. Li), dingjian\_happy@163.com (J. Ding), mfy@ustc.edu.cn (F. Miao)}, wangcheng@stu.hfuu.edu.cn (C. Wang), shucheng@stu.hfuu.edu.cn (C. Shu)}
\end{center}

\noindent\textbf{Abstract}: \emph{Disjunctive Hierarchical Secret Sharing (DHSS)} scheme is a type of secret sharing scheme in which the set of all participants is partitioned into disjoint subsets, and each subset is said to be a level with different degrees of trust and different thresholds. In this work, we focus on the Chinese Remainder Theorem (CRT)-based DHSS schemes due to their ability to accommodate flexible share sizes. We point out that the ideal DHSS scheme of Yang et al. (ISIT, 2024) and the asymptotically ideal DHSS scheme of Tiplea et al. (IET Information Security, 2021) are insecure. Consequently, existing CRT-based DHSS schemes either exhibit security flaws or have an information rate less than $\frac{1}{2}$. To address these limitations, we propose a CRT-based asymptotically perfect DHSS scheme that supports flexible share sizes. Notably, our scheme is asymptotically ideal when all shares are equal in size. Its information rate achieves one and it has computational security.

\noindent\textbf{Keywords}: secret sharing, disjunctive hierarchical secret sharing, asymptotically ideal secret sharing, Chinese Remainder Theorem, information rate

\section {Introduction}
Secret sharing (SS) is a cryptographic technique used to distribute a secret among a group of participants \cite{Shamir1979,Blakley1979}. It consists of two fundamental phases: a share generation phase and a secret reconstruction phase. In the former, for any secret from the secret space, the dealer divides the secret into multiple shares, and each participant receives a share from the dealer. In the latter, any authorized subset of participants can reconstruct the secret with their shares, and any given unauthorized subset of participants cannot do so. An SS scheme is said to be \emph{perfect} if each unauthorized subset of participants learns no information about the secret. The \emph{information rate} is a crucial efficiency metric in SS schemes. It is defined as the ratio of the size of the secret space to the size of the maximum share space. An SS scheme is said to be \emph{ideal} if it is perfect and its information rate achieves one. It is well known that the maximum information rate of perfect SS schemes is one.

A $(t,n)$-threshold scheme is an SS scheme in which any $t$ or more participants can reconstruct the secret, but any group of less than $t$ participants learns no information about the secret. This implies that the $(t,n)$-threshold scheme establishes equal trust among all participants in the secret reconstruction phase. But in practical implementations, the shares of participants are often correlated with their organizational roles or positions. \emph{Disjunctive Hierarchical Secret Sharing (DHSS)} schemes are particularly well-suited for this application scenario \cite{Simmons}. In a DHSS scheme, the set of all participants is partitioned into $m$ disjoint subsets $\mathcal{P}_1,\mathcal{P}_2,\ldots,\mathcal{P}_m$, referred to as levels. Each level has a distinct degree of trust and a corresponding threshold $t_{\ell}$ for $\ell\in \{1,2,\ldots,m\}$. The secret can be reconstructed if and only if there is a number $\ell\in \{1,2,\ldots,m\}$ such that the total number of participants from the first $\ell$ levels meets or exceeds $t_{\ell}$. When a level $\mathcal{P}_\ell$ contributes $r_{\ell}$ participants for $r_{\ell}<t_{\ell}$, the remaining $(t_{\ell}-r_{\ell})$ participants can be drawn from higher levels $\mathcal{P}_1,\mathcal{P}_2,\ldots,\mathcal{P}_{\ell-1}$.

DHSS schemes have been constructed from a variety of mathematical tools, including geometry \cite{Simmons}, vector spaces \cite{Brickell1989}, Birkhoff interpolation \cite{Tassa2007}, polymatroids \cite{Chenqi2022}, the Chinese Remainder Theorem (CRT) for integer ring \cite{Harn-Miao2014,Oguzhan-Ersoy2016,Tiplea2021} and the CRT for polynomial ring \cite{Yangjing2024}. Among these methods, CRT-based DHSS schemes exhibit a natural advantage: an inherent flexibility in assigning shares of varying sizes to participants. This capability makes them particularly well-suited for hierarchical access structures. Therefore, we focus on CRT-based DHSS schemes in this work. In 2014, Harn et al. \cite{Harn-Miao2014} use the CRT for integer ring to construct a DHSS scheme for the first time. However, Ersoy et al. \cite{Oguzhan-Ersoy2016} demonstrate that the scheme of Harn et al. \cite{Harn-Miao2014} is insecure. They further propose a DHSS scheme based on the CRT for the integer ring and hash functions. This scheme is \emph{asymptotically perfect} and computationally secure, as it publishes many hash values. Its information rate is less than $\frac{1}{2}$. Tiplea et al. \cite{Tiplea2021} also use the CRT for integer ring to construct an asymptotically perfect DHSS scheme. Their scheme achieves an information rate approaching one, making it \emph{asymptotically ideal}. This asymptotic perfectness stems directly from the use of CRT for integer ring. Recently, Yang et al. \cite{Yangjing2024} construct an ideal DHSS scheme using the CRT for the polynomial ring for the first time.

\emph{Our contributions}. We point out that the ideal DHSS scheme of Yang et al. \cite{Yangjing2024} and the asymptotically ideal scheme of Tiplea et al. \cite{Tiplea2021} are insecure. We further propose an asymptotically ideal DHSS scheme with small share size. More specifically, we make the following contributions:

1) For clarity, we present the attack method for the 2-level case of Yang et al. scheme \cite{Yangjing2024}. The same approach can also be applied to break the DHSS scheme of Tiplea et al. \cite{Tiplea2021}. Consequently, existing CRT-based DHSS schemes either exhibit security flaws or have an information rate less than $\frac{1}{2}$.

2) We propose a novel asymptotically ideal DHSS scheme based on the CRT for polynomial ring and one-way hash functions. Unlike the unconditionally secure ideal schemes in  \cite{Brickell1989,Tassa2007,Chenqi2022}, our scheme achieves flexible and smaller share sizes (see Table 1), although it provides computational security and requires publishing more values. Compared with the DHSS schemes in \cite{Harn-Miao2014, Oguzhan-Ersoy2016,Tiplea2021,Yangjing2024}, our scheme is secure and asymptotically ideal at the same time.

\begin{table}[!htb]\scriptsize
  \centering
   {\bf Table 1.}  Disjunctive hierarchical secret sharing schemes, where $t_m$ is the scheme's biggest threshold, and $|\mathcal{P}_{\ell}|$ is the cardinality of the level $\mathcal{P}_{\ell}$.\\
  \begin{tabular}{ccccccc}
  \hline
    \multirow{2}*{Schemes}  & \multirow{2}*{Security}    & Accommodate             &\multirow{2}*{Perfectness}  &Information              &\multirow{2}*{Share size}\\
    ~                        &~                        &flexible share sizes      &~                            &rate $\rho$                 &~\\
  \hline
  \hline
  \cite{Brickell1989}           &\multirow{2}*{unconditional}   &\multirow{2}*{No}  &\multirow{2}*{Yes}          &\multirow{2}*{$\rho=1$}  &{$mt_m^2\log_2 p$ bits,}\\
     (Scheme 2)                 &~                              &~                   &~                         &~                      &$p>\max_{\ell\in\{1,2,\ldots,m\}}\{|\mathcal{P}_{\ell}|\}$\\
  \hline
  \cite{Tassa2007}(when $t_m$     &\multirow{2}*{unconditional}         &\multirow{2}*{No} &\multirow{2}*{Yes}      &\multirow{2}*{$\rho=1$} &{$\log_2 p$ bits,}\\
     was small.)                  &~                         &~                        &~    &~   &$p>\begin{pmatrix} n+1 \\ t_m\end{pmatrix}\frac{(t_m-2)(t_m-1)}{2}+t_m$\\
  \hline
  \multirow{2}*{\cite{Chenqi2022}}    &\multirow{2}*{unconditional}  &\multirow{2}*{No}   &\multirow{2}*{Yes}           &\multirow{2}*{$\rho=1$}
                                                                                    &{$\frac{1}{2}\sum_{\ell=1}^{m-1}t_{\ell}(t_{\ell}-1)\log_2 p$ bits,}\\
   ~                                  &~                   &~                         &~                            &~  &$p>\max_{\ell\in\{1,2,\ldots,m\}}\{|\mathcal{P}_{\ell}|\}$\\
  \hline
  \cite{Harn-Miao2014}          &No                       &Yes                     &No                        &$\rho<1$          &*\\
  \hline
   \cite{Tiplea2021}            &No                       &Yes                     &No                         &$\rho<1$         &*\\
  \hline
  \cite{Yangjing2024}           &No                       &Yes                     &No                        &$\rho=1$          &*\\
  \hline
  \cite{Oguzhan-Ersoy2016}      &computational                      &Yes                  &Asymptotic                   &$\rho<\frac{1}{2}$   &*\\
  \hline
  Our scheme                    &computational                      &Yes                   &Asymptotic                &$\rho=1$               &$d_0\log_2 p$ bits, $d_0\geq 1, p>n$\\
  \hline
  \end{tabular}
 \end{table}

\emph{Paper organization}. After some preliminaries in Section \ref{Sec: Prelim}, the results of this work are organized as follows. In Section \ref{Sec: Security analysis of Yang}, we analyze the security of the scheme of Yang et al. \cite{Yangjing2024}. In Section \ref{Sec: our scheme}, we present a novel asymptotically ideal DHSS scheme, accompanied by its security analysis. We conclude in Section \ref{Sec: conclusion}.

\section{Preliminaries}\label{Sec: Prelim}
In this section, we will introduce basic notions about secret sharing, Chinese Reminder Theorem (CRT) for polynomial ring and the one-way hash function.
\subsection{Secret sharing}
Let $n, N_1, N_2$ be positive integers such that $N_1<N_2$. Denote by $[n]=\{1,2,\ldots,n\}$ and $[N_1, N_2]=\{N_1,N_1+1,\ldots,N_2\}$. Let $\mathbf{X}$ and $\mathbf{Y}$ be random variables, and let $\mathsf{H}(\mathbf{X})$ be the Shannon entropy of $\mathbf{X}$. Denote by $\mathsf{H}(\mathbf{X}|\mathbf{Y})$ the conditional Shannon entropy of $\mathbf{X}$ given $\mathbf{Y}$.

\begin{definition}[Secret sharing scheme]
Let $\mathcal{P}=\{P_1,P_2,\ldots,P_n\}$ be a group of $n$ participants. A secret sharing scheme of $n$ participants consists of a share generation phase and a secret reconstruction phase, as shown below.
\begin{itemize}
  \item[1)] Share Generation Phase. For any secret from the secret space $\mathcal{S}$, the dealer applies the map
                      \[\mathsf{SHARE}\colon\mathcal{S}\times\mathcal{R}\mapsto\mathcal{S}_1\times\mathcal{S}_2\times\dotsb\times\mathcal{S}_n\]
    to assign shares to participants from $\mathcal{P}$, where $\mathcal{S}_i$ is the share space of the participant $P_i$, and $\mathcal{R}$ is a set of random inputs.
  \item[2)] Secret Reconstruction Phase. Any given authorized subset $\mathcal{A}\subseteq \mathcal{P}$ can reconstruct the secret by using their shares and the map
                     \[\mathsf{RECON}\colon\prod_{P_i\in\mathcal{A}}\mathcal{S}_i\mapsto\mathcal{S},\]
   while any given unauthorized subset cannot reconstruct the secret.
\end{itemize}
\end{definition}

We usually take the number $i$ as the $i$-th participant $P_i$ in this work. This means that $[n]=\mathcal{P}$. If any given unauthorized subset of participants learns nothing about the secret, the secret sharing scheme is said to be perfect.

\begin{definition}[Information rate, \cite{Ning2018}]\label{def: Information rate}
For a secret sharing scheme of $n$ participants, its information rate is defined as
                                \[\rho=\frac{\mathsf{H}(\mathbf{S})}{\max_{i\in [n]}^{}{\mathsf{H}(\mathbf{S}_i)}},\]
where $\mathbf{S}$ and $\mathbf{S}_i$ are random variables corresponding to the secret and the share of the $i$-th participant, respectively. When $\mathbf{S}$ and $\mathbf{S}_i$  are random and uniformly distributed in the secret space $\mathcal{S}$ and each share space $\mathcal{S}_i$, it holds that
                               \[\rho=\frac{\log_2|\mathcal{S}|}{\max_{i\in [n]}^{}{\log_2|\mathcal{S}_i|}},\]
where $|\mathcal{S}|$ is the number of elements in the secret space, and $|\mathcal{S}_i|$ is the number of elements in the share space of the $i$-th participant.
\end{definition}

The information rate $\rho$ of a perfect secret sharing scheme satisfies $\rho\leq 1$.  Specifically, a secret sharing scheme is said to be ideal if it is perfect and has information rate one.

\begin{definition}[Disjunctive hierarchical secret sharing scheme, \cite{Tassa2007}]\label{def: Multilevel secret sharing}
Assume that a set $\mathcal{P}$ of $n$ participants is partitioned into $m$ disjoint subsets $\mathcal{P}_1, \mathcal{P}_2,\ldots, \mathcal{P}_m$, namely,
        \[\mathcal{P}=\cup_{\ell=1}^{m} \mathcal{P}_{\ell},~and~\mathcal{P}_{\ell_1}\cap\mathcal{P}_{\ell_2}=\varnothing~ for~any~ 1\leq \ell_1<\ell_2\leq m.\]
For a threshold sequence $t_1, t_2, \ldots, t_m$ such that $1\leq t_1< t_2<\cdots<t_m\leq n$, the Disjunctive Hierarchical Secret Sharing (DHSS) scheme with the given threshold sequence is a secret sharing scheme such that the following correctness and privacy are satisfied.
\begin{itemize}
  \item[1)] Correctness. The secret can be reconstructed by any given element of $\Gamma$, where
       \[\Gamma=\{\mathcal{A} \subseteq \mathcal{P}: \exists \ell \in [m]~such~that~|\mathcal{A}\cap(\bigcup_{w=1}^{\ell}\mathcal{P}_{w})|\geq t_{\ell}\}.\]
  \item[2)] Privacy. The secret cannot be reconstructed by any given $\mathcal{B}\notin \Gamma$.
\end{itemize}
A DHSS scheme is said to be ideal if it is perfect and has an information rate one.
\end{definition}
\begin{definition}[Asymptotically ideal DHSS scheme, \cite{Quisquater2002}]\label{def: Asymptotically Ideal DHSS}
For a DHSS scheme with the secret space $\mathcal{S}$ and share spaces $\mathcal{S}_i, i\in [n]$, it is said to be asymptotically ideal if the following asymptotic perfectness and asymptotic maximum information rate are satisfied.
\begin{itemize}
  \item[1)] Asymptotic perfectness. For all $\epsilon_1>0$, there is a positive integer $\sigma_1$ such that for all $\mathcal{B}\notin\Gamma$ and $|\mathcal{S}|>\sigma_1$, the loss entropy
                       \[\Delta(|\mathcal{S}|)=\mathsf{H}(\mathbf{S})-\mathsf{H}(\mathbf{S}|\mathbf{V}_{\mathcal{B}})\leq \epsilon_1,\]
   where $\mathsf{H}(\mathbf{S})\neq 0$, and $\mathbf{S}, \mathbf{V}_{\mathcal{B}}$ are random variables corresponding to the secret and the knowledge of $\mathcal{B}$, respectively.
  \item[2)] Asymptotic maximum information rate. For all $\epsilon_2>0$, there is a positive integer $\sigma_2$ such that for all $\mathcal{B}\notin\Gamma$ and $|\mathcal{S}|>\sigma_2$, it holds that
      \[\frac{\max_{i\in [n]}^{}{\mathsf{H}(\mathbf{S}_i)}}{\mathsf{H}(\mathbf{S})}\leq 1+\epsilon_2.\]
\end{itemize}
\end{definition}

\subsection{Chinese Reminder Theorem and one-way hash function}

\begin{lemma}[CRT for polynomial ring, \cite{Ning2018,Ding2023}]\label{le: CRT}
Let $\mathbb{F}$ be a finite field and $m_1(x),m_2(x),\ldots,\\m_n(x)\in \mathbb{F}[x]$ be pairwise coprime polynomials. Denote by $M(x)=\prod\limits_{i=1}^{n}m_i(x)$, $M_i(x)=M(x)/m_i(x)$, and $\lambda_i(x)\equiv M_i^{-1}(x)\pmod {m_i(x)}$. For any given polynomials $y_1(x),y_2(x),\\\ldots,y_n(x)\in \mathbb{F}[x]$ and a system of congruences
          \[ y(x)\equiv y_i(x) \pmod {m_i(x)},~ \mathrm{for~ all}~ i\in [n],\]
it holds that
          \[y(x)\equiv\sum\limits_{i=1}^{n}\lambda_i(x)M_i(x)y_i(x) \pmod {M(x)}.\]
If the degree of $y(x)$ satisfies $\deg(y(x))<\deg(M(x))$, the solution is unique and we denote it as
          \[y(x)=\sum\limits_{i=1}^{n}\lambda_i(x)M_i(x)y_i(x) \pmod {M(x)}.\]
\end{lemma}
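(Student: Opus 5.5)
We argue in the Euclidean domain $\mathbb{F}[x]$: first existence and the explicit formula, then uniqueness under the degree bound. For this, note that $M_i(x)$ is coprime to $m_i(x)$. Indeed, $m_i(x)$ is coprime to each $m_j(x)$ with $j\neq i$, and coprimality is preserved under products: if $\gcd(a,c)=\gcd(b,c)=1$, Bézout gives $ua+vc=1$ and $u'b+v'c=1$, and multiplying these yields $\gcd(ab,c)=1$. Induction on the number of factors then gives $\gcd(M_i(x),m_i(x))=1$. A further Bézout identity $\lambda_i(x)M_i(x)+\mu_i(x)m_i(x)=1$ shows that $\lambda_i(x)$ exists, i.e.\ $M_i(x)$ is invertible modulo $m_i(x)$.

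For existence and the formula, set $Y(x)=\sum_{i=1}^{n}\lambda_i(x)M_i(x)y_i(x)$. For a fixed $j$, every term with $i\neq j$ contains the factor $m_j(x)$ inside $M_i(x)$ and so vanishes modulo $m_j(x)$. The $j$-th term satisfies $\lambda_j M_j\equiv 1\pmod{m_j}$, so $Y\equiv y_j\pmod{m_j}$ for every $j$. Thus $Y(x)$ solves the system. Conversely, if $y(x)$ is any solution, then $y-Y$ is divisible by each $m_j(x)$. Since the $m_j$ are pairwise coprime, the product argument above, together with the fact that $a\mid c$, $b\mid c$ and $\gcd(a,b)=1$ imply $ab\mid c$, gives $M(x)\mid y(x)-Y(x)$. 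This is exactly the congruence $y\equiv Y\pmod{M}$.

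For uniqueness, reduce $Y(x)$ modulo $M(x)$ by polynomial division. This produces a solution of degree less than $\deg M(x)$. If two solutions both have degree less than $\deg M(x)$, their difference is divisible by $M(x)$ but has smaller degree than $M(x)$, so the difference is zero.

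The only step requiring real care is passing from ``divisible by each $m_j$'' to ``divisible by $M$''. This is where pairwise coprimality is essential, and it rests on the Bézout argument in the first paragraph. Everything else is direct verification.
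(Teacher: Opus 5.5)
Your proof is correct and complete: it is the standard argument, with Bézout giving $\gcd(M_i(x),m_i(x))=1$ and hence $\lambda_i(x)$, a direct check that $\sum_{i}\lambda_i(x)M_i(x)y_i(x)$ solves the system, pairwise coprimality forcing $M(x)$ to divide the difference of any two solutions, and the degree bound giving uniqueness of the reduced solution. The paper gives no proof of this lemma and simply quotes it from \cite{Ning2018,Ding2023}; your argument is the classical one behind that citation, and it does not use that $\mathbb{F}$ is finite.
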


\begin{definition}[One-way hash function, \cite{RM1985}] A one-way hash function is a function that satisfies the following conditions:
   \begin{itemize}
     \item[(i)] The function $h(\cdot)$ is publicly known.
     \item[(ii)] The input $x$ of the function is of arbitrary length, and the output $h(x)$ is of a fixed length.
     \item[(iii)] Given $h(\cdot)$ and $x$, computing $h(x)$ is easy.
     \item[(iv)] Given an image $y$ of the function of $h(\cdot)$, it is hard to find a message $x$ such that $h(x)=y$, and given $x$ and $h(x)$, it is hard to find another message
           $x^{\prime}\neq x$ such that $h(x^{\prime})=h(x)$.
     \end{itemize}
   \end{definition}

\section{Security analysis of Yang et al. scheme}\label{Sec: Security analysis of Yang}
In this section, we review the scheme of Yang et al. \cite{Yangjing2024}, and present an attack method to show that the ideal DHSS scheme of Yang et al. \cite{Yangjing2024} and the asymptotically ideal DHSS scheme of Tiplea et al. \cite{Tiplea2021} are insecure.

\subsection{Review of Yang et al. scheme}\label{Subsec: review of Yang scheme}

For clarity in presentation, we rewrite the DHSS scheme of Yang et al. with total levels $m=2$. Let $\mathcal{P}$ be a set of $n$ participants, and it is partitioned into $2$ disjoint subsets $\mathcal{P}_1$ and $\mathcal{P}_2$. Denote by $n_{1}=|\mathcal{P}_{1}|$ and $n_{2}=|\mathcal{P}_{2}|$, then $n=n_1+n_2$. Let $t_1, t_2$ be thresholds such that $1\leq t_1<t_2\leq n_2$ and $t_{1}\leq n_{1}$. The scheme of Yang et al. consists of a share generation phase and a secret reconstruction phase.

\textbf{1) Share Generation Phase}. Let $p$ be a prime integer, and $\mathbb{F}_p$ be a finite field with $p$ elements.
 \begin{itemize}
       \item The dealer chooses a publicly known integer $d_0\geq 1$, and sets $m_0(x)=x^{d_0}\in \mathbb{F}_p[x]$. The dealer selects publicly known pairwise coprime polynomials $m_i(x)\in \mathbb{F}_p[x], i\in [n]$ such that the following three conditions are satisfied.
          \begin{itemize}
             \item[(i)] For all $i\in [n]$, $m_0(x)$ and $m_i(x)$ are coprime.
             \item[(ii)]Denote by $d_i=\deg(m_i(x))$ for $i\in [n]$, it holds that $d_0\leq d_1\leq d_2\leq\cdots\leq d_n$.
            \item[(iii)] $d_0+\sum\limits_{i=n-t_{\ell}+2}^{n}d_i\leq \sum\limits_{i=1}^{t_{\ell}}d_i$ for all $\ell\in \{1,2\}$.
           \end{itemize}
       \item For any given secret $s(x)\in \{g(x)\in \mathbb{F}_p[x]:\deg(g(x))<d_0\}$, the dealer randomly chooses two polynomials
             \[\alpha_{\ell}(x)\in \mathcal{G}_{\ell}=\{g(x)\in \mathbb{F}_p[x]:\deg(g(x))<(\sum\limits_{i=1}^{t_{\ell}}d_i)-d_0\}, \ell\in \{1,2\}.\]
        Let $f_{\ell}(x)=s(x)+\alpha_{\ell}(x)x^{d_0}$ for $\ell\in \{1,2\}$, then the dealer computes
         \[c_i(x)=\begin{cases}
            f_1(x) \pmod {m_i(x)},\mathrm{if}~i\in [n_1],\\
            f_2(x) \pmod {m_i(x)},\mathrm{if}~i\in [n_1+1,n],\\
         \end{cases}\]
         and sends each share $c_i(x)$ to the $i$-th participant $P_i$.

        \item The dealer publishes $w_i(x)=(f_2(x)-c_i(x))\pmod{m_i(x)}$ for $i\in [n_1]$.

   \end{itemize}

\textbf{2) Secret Reconstruction Phase}. For any $\mathcal{A} \subseteq \mathcal{P}$ such that $\mathcal{A}^{(\ell)}=A\cap (\cup_{w=1}^{\ell}P_{w}), \\|\mathcal{A}^{(\ell)}|\geq t_{\ell}$ for some $\ell\in \{1,2\}$, participants of $\mathcal{A}$ pool their shares and corresponding public polynomials to determine the polynomial
                          \[f_{\ell}(x)=\sum\limits_{i\in \mathcal{A}^{(\ell)}}\lambda_{i,\mathcal{A}^{(\ell)}}(x)M_{i,\mathcal{A}^{(\ell)}}(x)c_i^{(\ell)}(x) \pmod {M_{\mathcal{A}^{(\ell)}}(x)},\]
and reconstruct the secret $s(x)=f_{\ell}(x)\pmod{m_0(x)}$, where $M_{\mathcal{A}^{(\ell)}}(x)=\prod\limits_{i\in\mathcal{A}^{(\ell)}}m_i(x)$, $M_{i,\mathcal{A}^{(\ell)}}(x)=M_{\mathcal{A}^{(\ell)}}(x)/m_i(x)$, $\lambda_{i,\mathcal{A}^{(\ell)}}(x)\equiv M_{i,\mathcal{A}^{(\ell)}}^{-1}(x)\pmod {m_i(x)}$, and
        \[c_i^{(\ell)}(x)=\begin{cases}
            c_i(x),~\mathrm{if}~\ell=1,i\in \mathcal{A}^{(\ell)}\subseteq [n_1],\\
            c_i(x)+w_i(x),~\mathrm{if}~\ell=2, i\in \mathcal{A}^{(\ell)}\cap [n_1],\\
            c_i(x),~\mathrm{if}~\ell=2, i\in \mathcal{A}^{(\ell)}\cap [n_1+1,n].\\
         \end{cases}\]

\subsection{An attack method on the Yang et al. scheme}

In the scheme of Yang et al. \cite{Yangjing2024} with total levels $m=2$, we let $\mathcal{P}$ be a set of $n=7$ participants. It is partitioned into $m=2$ disjoint subsets $\mathcal{P}_1=\{P_1,P_2,P_3\}, \mathcal{P}_2=\{P_4,P_5,P_6,P_7\}$ such that $n_{1}=|\mathcal{P}_{1}|=3$ and $n_{2}=|\mathcal{P}_{2}|=4$. Let $t_1=2, t_2=3$ be the threshold sequence, and denote by $\mathcal{B}=\{P_4,P_5\}$. Clearly, $|\mathcal{B}\cap\mathcal{P}_{1}|=0<t_{1}$ and $|\mathcal{B}\cap(\bigcup_{w=1}^{2}\mathcal{P}_{w})|=2<t_{2}$. This shows that
             \[\mathcal{B}\notin \{\mathcal{A} \subseteq \mathcal{P}: \exists \ell\in\{1,2\} ~such~that~|\mathcal{A}\cap(\bigcup_{w=1}^{\ell}\mathcal{P}_{w})|\geq t_{\ell}\}.\]
Yang et al. claimed that participants of $\mathcal{B}$ learn no information about the secret. We will prove that participants of $\mathcal{B}$ can reconstruct the secret with the following three steps.

\textbf{Step 1}. Participants of $\mathcal{B}$ obtain $f_2(x)-f_1(x)\in \mathbb{F}_p[x]$ from public polynomials.

It is easy to check that $\deg(f_1(x))<\sum\limits_{i=1}^{t_{1}}d_i$ and $\deg(f_2(x))< \sum\limits_{i=1}^{t_{2}}d_i$. Since $t_1<t_2$, then it holds that $\deg(f_2(x)-f_1(x))< \sum\limits_{i=1}^{t_{2}}d_i$. For $i\in [n_1]$, it is known that $c_i(x)=f_1(x)\pmod{m_i(x)}$. Since polynomials $w_i(x)=(f_2(x)-c_i(x))\pmod{m_i(x)}, i\in [n_1]$ are public, then participants of $\mathcal{B}$ obtain
                       \[w_i(x)=(f_2(x)-f_1(x))\pmod{m_i(x)}, i\in [n_1].\]
Since $\sum\limits_{i\in [n_1]}d_i> \deg(f_2(x))$, the polynomial $f_2(x)-f_1(x)\in \mathbb{F}_p[x]$ can be determined by using CRT for polynomial ring, namely,
                         \[f_2(x)-f_1(x)=\sum\limits_{i\in [n_1]}\lambda_{i,[n_1]}(x)M_{i,[n_1]}(x)w_i(x) \pmod {M_{[n_1]}(x)}\]
where
         \[M_{[n_1]}(x)=\prod\limits_{i\in [n_1]}m_i(x), M_{i,[n_1]}(x)=M_{[n_1]}(x)/m_i(x), \lambda_{i,[n_1]}(x)\equiv M_{i,[n_1]}^{-1}(x)\pmod {m_i(x)}.\]

\textbf{Step 2.} Participants of $\mathcal{B}$ can get $f_1(x)\in \mathbb{F}_p[x]$ from $f_2(x)-f_1(x)\in \mathbb{F}_p[x]$ and their shares.

For any $i\in \mathcal{B}$, let $u_i(x)=(f_2(x)-f_1(x))\pmod{m_i(x)}$, and then participants of $\mathcal{B}$ get
            \begin{displaymath}
             \begin{aligned}
              c_i(x)-u_i(x)&=f_2(x)-(f_2(x)-f_1(x))\pmod{m_i(x)}\\
                           &=f_1(x)\pmod{m_i(x)}.
            \end{aligned}
          \end{displaymath}
Since $\sum\limits_{i\in \mathcal{B}}d_i>\deg(f_1(x))$, then the polynomial $f_1(x)\in \mathbb{F}_p[x]$ can be determined by using CRT for polynomial ring, namely,
                         \[f_1(x)=\sum\limits_{i\in \mathcal{B}}\lambda_{i,\mathcal{B}}(x)M_{i,\mathcal{B}}(x)(c_i(x)-u_i(x)) \pmod {M_{\mathcal{B}}(x)}\]
where
         \[M_{\mathcal{B}}(x)=\prod\limits_{i\in \mathcal{B}}m_i(x), M_{i,\mathcal{B}}(x)=M_{\mathcal{B}}(x)/m_i(x), \lambda_{i,\mathcal{B}}(x)\equiv M_{i,\mathcal{B}}^{-1}(x)\pmod {m_i(x)}.\]

\textbf{Step 3.} Participants of $\mathcal{B}$ reconstruct the secret $s(x)=f_1(x)\pmod{m_0(x)}$.

\begin{remark}
    In the scheme of Yang et al. \cite{Yangjing2024} with total levels $m=2$ and $n_1\geq t_2$, the secret can be reconstructed by any $\mathcal{B}\subseteq P$ of cardinality $|\mathcal{B}|\geq t_1$. In this case, the scheme of Yang et al. is insecure, and it is not a DHSS scheme. Besides, we find that the scheme of Yang et al. is not perfect if $t_1\leq n_1<t_2$.
\end{remark}
\begin{remark}
    The same approach can be applied to break the DHSS scheme of Tiplea et al. \cite{Tiplea2021}.
\end{remark}

\section{A novel asymptotically ideal DHSS scheme}\label{Sec: our scheme}
In this section, we will construct a novel asymptotically ideal DHSS scheme by using CRT for polynomial ring and one-way hash functions. Our scheme is given in subsection \ref{Subsec:our DHSS}, and we analyze its security in subsection \ref{Subsec: security of our DHSS}.

\subsection{Our scheme}\label{Subsec:our DHSS}
Let $\mathcal{P}$ be a group of $n$ participants, and it is partitioned into $m$ disjoint subsets $\mathcal{P}_1, \mathcal{P}_2,\ldots,$ $ \mathcal{P}_m$.
Let $n_{\ell}=|\mathcal{P}_{\ell}|$ and $N_{\ell}=\sum_{w=1}^{\ell} n_{w}$ for $\ell\in [m]$. Denote by $t_1, t_2, \ldots, t_m$ a threshold sequence such that $1\leq t_1<t_2<\cdots<t_m$ and $t_{\ell}\leq n_{\ell}$ for $\ell\in [m]$. Let $\lfloor x \rfloor$ be the biggest integer not more than $x$. Our scheme consists of a share generation phase and a secret reconstruction phase.

\textbf{1) Share Generation Phase}. Let $p$ be a prime integer, and $\mathbb{F}_p$ be a finite field with $p$ elements.
 \begin{itemize}
       \item \textbf{Identities.} The dealer chooses a publicly known integer $d_0\geq 1$, and sets $m_0(x)=x^{d_0}\in \mathbb{F}_p[x]$. The dealer selects publicly known pairwise coprime polynomials $m_i(x)\in \mathbb{F}_p[x], i\in [n]$ such that the following three conditions are satisfied.
          \begin{itemize}
             \item[(i)] For all $i\in [n]$, $m_0(x)$ and $m_i(x)$ are coprime.
             \item[(ii)]Denote by $d_i=\deg(m_i(x))$ for $i\in [n]$, it holds that $d_0\leq d_1\leq d_2\leq\cdots\leq d_n$.
            \item[(iii)] $d_0+\sum\limits_{i=n-t_{\ell}+2}^{n}d_i\leq \sum\limits_{i=1}^{t_{\ell}}d_i$ for all $\ell\in [m]$.
           \end{itemize}
       \item \textbf{Shares}. Let $\mathcal{S}=\{g(x)\in \mathbb{F}_p[x]:\deg(g(x))<d_0\}$ be the secret space.
          \begin{itemize}
           \item[Step 1.] For any $s(x)\in \mathcal{S}$, the dealer randomly chooses polynomials
                 \[\alpha_{\ell}(x)\in \mathcal{G}_{\ell} =\{g(x)\in \mathbb{F}_p[x]:\deg(g(x))<(\sum\limits_{i=1}^{t_{\ell}}d_i)-d_0\}, \ell\in [m].\]
              Let $f_{\ell}(x)=s(x)+\alpha_{\ell}(x)x^{d_0}$ for $\ell\in [m]$.
           \item[Step 2.] The dealer sends each share $s_i(x)$ to the $i$-th participant, where
         \[s_i(x)=\begin{cases}
            c_i(x),\mathrm{if}~i\in [N_{m-1}],\\
            f_m(x) \pmod {m_i(x)},\mathrm{if}~i\in [N_{m-1}+1,N_m],\\
         \end{cases}\]
         and the polynomials
            \[c_{i}(x)=c_{i,0}+c_{i,1}x+\cdots+c_{i,d_i-1}x^{d_i-1}\in \mathbb{F}_p[x], i\in [N_{m-1}]\]
         are randomly chosen by the dealer.
         \end{itemize}

        \item \textbf{Hierarchy}.
          \begin{itemize}
           \item[Step 1.]The dealer selects $m$ publicly known distinct one-way hash functions $h_1(\cdot), h_2(\cdot),\\\dots, h_m(\cdot)$. These functions take a value of any length as a input, and output a value of length $\lfloor \log_{2} p \rfloor$. For any $i\in [N_{m-1}]$, denote by
               \[H_{\ell}(s_i(x))=H_{\ell}(c_i(x))=h_{\ell}(c_{i,0})+h_{\ell}(c_{i,1})x+\cdots+h_{\ell}(c_{i,d_i-1})x^{d_i-1}\in \mathbb{F}_p[x], \ell\in [m].\]
           \item[Step 2.] The dealer publishes
                  \[w_i^{(\ell)}(x)=(f_{\ell}(x)-H_{\ell}(s_i(x)))\pmod{m_i(x)}~\mathrm{for~all}~\ell\in [m-1], i\in [N_{\ell}],\]
            and $w_i^{(m)}(x)=(f_{m}(x)-H_{m}(s_i(x)))\pmod{m_i(x)}$ for all $i\in [N_{m-1}]$.
            \end{itemize}
 \end{itemize}

\textbf{2) Secret Reconstruction Phase}. For any $\mathcal{A} \subseteq \mathcal{P}$ such that $\mathcal{A}^{(\ell)}=\mathcal{A}\cap (\cup_{w=1}^{\ell}\mathcal{P}_{w}),\\
 |\mathcal{A}^{(\ell)}|\geq t_{\ell}$ for some $\ell\in [m]$, participants of $\mathcal{A}$ compute
                          \[f_{\ell}(x)=\sum\limits_{i\in \mathcal{A}^{(\ell)}}\lambda_{i,\mathcal{A}^{(\ell)}}(x)M_{i,\mathcal{A}^{(\ell)}}(x)s_i^{(\ell)}(x) \pmod {M_{\mathcal{A}^{(\ell)}}(x)},\]
and reconstruct the secret $s(x)=f_{\ell}(x)\pmod{m_0(x)}$, where $M_{\mathcal{A}^{(\ell)}}(x)=\prod\limits_{i\in\mathcal{A}^{(\ell)}}m_i(x)$, $M_{i,\mathcal{A}^{(\ell)}}(x)=M_{\mathcal{A}^{(\ell)}}(x)/m_i(x)$, $\lambda_{i,\mathcal{A}^{(\ell)}}(x)\equiv M_{i,\mathcal{A}^{(\ell)}}^{-1}(x)\pmod {m_i(x)}$, and
        \[s_i^{(\ell)}(x)=\begin{cases}
            H_{\ell}(s_i(x))+w_i^{(\ell)}(x),~\mathrm{if}~\ell\in [m-1], i\in \mathcal{A}^{(\ell)}\subseteq [N_{\ell}],\\
            H_{m}(s_i(x))+w_i^{(m)}(x),~\mathrm{if}~\ell=m, i\in \mathcal{A}^{(\ell)}\cap [N_{m-1}],\\
            s_i(x),~\mathrm{if}~\ell=m, i\in \mathcal{A}^{(\ell)}\cap [N_{m-1}+1,N_m].\\
         \end{cases}\]

\subsection{Security analysis of our scheme}\label{Subsec: security of our DHSS}
We will prove the correctness, asymptotic perfectness and asymptotic maximum information rate of our scheme in this subsection.
\begin{theorem}[Correctness]\label{Theorem:Correctness of our scheme} The secret can be reconstructed by any authorized subset
       \[\mathcal{A}\in\{\mathcal{A} \subseteq \mathcal{P}: \exists \ell \in [m]~such~that~|\mathcal{A}\cap(\bigcup_{w=1}^{\ell}\mathcal{P}_{w})|\geq t_{\ell}\}.\]
\end{theorem}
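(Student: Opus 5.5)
Fix an authorized $\mathcal{A}$ and an index $\ell\in[m]$ with $|\mathcal{A}^{(\ell)}|\geq t_{\ell}$. The plan is to show that every pooled value $s_i^{(\ell)}(x)$, $i\in\mathcal{A}^{(\ell)}$, equals $f_{\ell}(x)\pmod{m_i(x)}$. Then Lemma \ref{le: CRT} recovers $f_{\ell}(x)$ uniquely, because its degree is below $\deg M_{\mathcal{A}^{(\ell)}}(x)$. Reducing modulo $m_0(x)=x^{d_0}$ then gives $s(x)$.

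\textbf{Step 1: the congruences.} Split into cases following the definition of $s_i^{(\ell)}(x)$.
\begin{itemize}
\item If $\ell\in[m-1]$, then $\mathcal{A}^{(\ell)}\subseteq[N_{\ell}]$. For each such $i$ the value $w_i^{(\ell)}(x)=(f_{\ell}(x)-H_{\ell}(s_i(x)))\pmod{m_i(x)}$ is published. Hence $H_{\ell}(s_i(x))+w_i^{(\ell)}(x)\equiv f_{\ell}(x)\pmod{m_i(x)}$.
\item If $\ell=m$ and $i\in[N_{m-1}]$, the same argument applies with $w_i^{(m)}(x)$.
\item If $\ell=m$ and $i\in[N_{m-1}+1,N_m]$, then $s_i(x)=f_m(x)\pmod{m_i(x)}$ by construction.
\end{itemize}
Participants can compute $H_{\ell}(s_i(x))$ themselves, since the $h_{\ell}$ are public and each participant holds $s_i(x)$. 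The hash values are read as elements of $\mathbb{F}_p$, using output length $\lfloor\log_2 p\rfloor$. So every participant in $\mathcal{A}^{(\ell)}$ holds a residue of $f_{\ell}(x)$ modulo its own $m_i(x)$.

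\textbf{Step 2: the degree bound.} By construction $\deg\alpha_{\ell}<\sum_{i=1}^{t_{\ell}}d_i-d_0$ and $\deg s<d_0$. Therefore $\deg f_{\ell}(x)<\sum_{i=1}^{t_{\ell}}d_i$. Since $d_1\leq\cdots\leq d_n$ by condition (ii), any $t_{\ell}$ distinct indices have total degree at least $\sum_{i=1}^{t_{\ell}}d_i$. Because $|\mathcal{A}^{(\ell)}|\geq t_{\ell}$, we get
\[\deg M_{\mathcal{A}^{(\ell)}}(x)=\sum_{i\in\mathcal{A}^{(\ell)}}d_i\geq\sum_{i=1}^{t_{\ell}}d_i>\deg f_{\ell}(x).\]
The moduli $m_i(x)$ are pairwise coprime, so Lemma \ref{le: CRT} applies. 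It says the CRT combination in the reconstruction phase equals $f_{\ell}(x)$ exactly, not merely modulo $M_{\mathcal{A}^{(\ell)}}(x)$.

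\textbf{Step 3: extracting the secret.} Since $f_{\ell}(x)=s(x)+\alpha_{\ell}(x)x^{d_0}$ and $\deg s<d_0$, we have $f_{\ell}(x)\pmod{x^{d_0}}=s(x)$.

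The only delicate point is the degree comparison in Step 2. It relies on the ordering $d_1\leq\cdots\leq d_n$: the smallest possible $\deg M_{\mathcal{A}^{(\ell)}}(x)$ comes from the first $t_{\ell}$ moduli. Everything else follows directly from the definitions.
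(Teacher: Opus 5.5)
Your proof is correct and follows essentially the same route as the paper's. You check that each pooled value $s_i^{(\ell)}(x)$ is the residue of $f_{\ell}(x)$ modulo $m_i(x)$, use the ordering $d_1\leq\cdots\leq d_n$ to get $\deg f_{\ell}(x)<\sum_{i=1}^{t_{\ell}}d_i\leq\sum_{i\in\mathcal{A}^{(\ell)}}d_i$, invoke uniqueness in Lemma \ref{le: CRT}, and reduce modulo $x^{d_0}$; your explicit case-by-case verification of the congruences, which the paper leaves implicit, is a useful clarification rather than a different argument.
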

\begin{proof}
Let $\mathcal{A}\subseteq \mathcal{P}$ such that $\mathcal{A}^{(\ell)}=\mathcal{A}\cap (\cup_{w=1}^{\ell}P_{w}), |\mathcal{A}^{(\ell)}|\geq t_{\ell}$ for some $\ell\in [m]$. Without loss of generality, we assume that
          \[\mathcal{A}^{(\ell)}=\{i_{1},i_{2},\dots,i_{|\mathcal{A}^{(\ell)}|}\}, i_{1}<i_{2}<\cdots<i_{{|\mathcal{A}^{(\ell)}|}},i_{|\mathcal{A}^{(\ell)}|}\in \mathcal{P}_\ell~\mathrm{and}~|\mathcal{A}^{(\ell)}|\geq t_{\ell}.\]
We will prove that the subset $\mathcal{A}^{(\ell)}$ is an authorized subset.

By using the shares $s_i(x), i\in \mathcal{A}^{(\ell)}$, the publicly known one-way hash function $h_{\ell}(\cdot)$, the publicly known polynomials $m_i(x), i\in \mathcal{A}^{(\ell)}$, and the publicly known polynomials $w_i^{(\ell)}(x)=(f_{\ell}(x)-H_{\ell}(s_i(x)))\pmod{m_i(x)}$ for $i\in \mathcal{A}^{(\ell)}$, participants of $\mathcal{A}^{(\ell)}$ compute
         \[s_i^{(\ell)}(x)=\begin{cases}
            H_{\ell}(s_i(x))+w_i^{(\ell)}(x),~\mathrm{if}~\ell\in [m-1], i\in \mathcal{A}^{(\ell)}\subseteq [N_{\ell}],\\
            H_{m}(s_i(x))+w_i^{(m)}(x),~\mathrm{if}~\ell=m, i\in \mathcal{A}^{(\ell)}\cap [N_{m-1}],\\
            s_i(x),~\mathrm{if}~\ell=m, i\in \mathcal{A}^{(\ell)}\cap [N_{m-1}+1,N_m].\\
         \end{cases}\]
for $i\in \mathcal{A}^{(\ell)}$, and get the system of congruences
      \begin{displaymath}
         \left\{\begin{aligned}
            f_{\ell}(x)\equiv &s_{i_1}^{(\ell)}(x) \pmod{m_{i_1}(x)},\\
             f_{\ell}(x)\equiv &s_{i_2}^{(\ell)}(x) \pmod{m_{i_2}(x)},\\
                     &\vdots\\
             f_{\ell}(x)\equiv &s_{i_{|\mathcal{A}^{(\ell)}|}}^{(\ell)}(x) \pmod{m_{i_{|\mathcal{A}^{(\ell)}|}}(x)}.\\
           \end{aligned} \right.
        \end{displaymath}
Based on the CRT for polynomial ring in Lemma \ref{le: CRT}, it holds that
   \[f_{\ell}(x)\equiv\sum\limits_{i\in \mathcal{A}^{(\ell)}}\lambda_{i,\mathcal{A}^{(\ell)}}(x)M_{i,\mathcal{A}^{(\ell)}}(x)s_i^{(\ell)}(x) \pmod {M_{\mathcal{A}^{(\ell)}}(x)},\]
where $M_{\mathcal{A}^{(\ell)}}(x)=\prod\limits_{i\in\mathcal{A}^{(\ell)}}m_i(x)$, $M_{i,\mathcal{A}^{(\ell)}}(x)=M_{\mathcal{A}^{(\ell)}}(x)/m_i(x)$ and $\lambda_{i,\mathcal{A}^{(\ell)}}(x)\equiv M_{i,\mathcal{A}^{(\ell)}}^{-1}(x)\pmod {m_i(x)}$. Since $d_0\leq d_1\leq d_2\leq\cdots\leq d_n$, $s(x)\in \mathcal{S}, \alpha_{\ell}(x)\in \mathcal{G}_{\ell}$, and $f_{\ell}(x)=s(x)+\alpha_{\ell}(x)x^{d_0}$ for $\ell\in [m]$, the degree of the polynomial $f_{\ell}(x)$ satisfies
                \[\deg{(f_{\ell}(x))}<\sum\limits_{i=1}^{t_{\ell}}d_i\leq\sum\limits_{i\in \mathcal{A}^{(\ell)}}d_i,\]
where $|\mathcal{A}^{(\ell)}|\geq t_{\ell}$. From the CRT for polynomial ring in Lemma \ref{le: CRT}, it holds that
  \[f_{\ell}(x)=\sum\limits_{i\in \mathcal{A}^{(\ell)}}\lambda_{i,\mathcal{A}^{(\ell)}}(x)M_{i,\mathcal{A}^{(\ell)}}(x)s_i^{(\ell)}(x) \pmod {M_{\mathcal{A}^{(\ell)}}(x)},\]
and the secret $s(x)=f_{\ell}(x)\pmod{m_0(x)}$.
\end{proof}

Now we will prove the \emph{asymptotic perfectness} of our scheme. Recall that $f_{\ell}(x)=s(x)+\alpha_{\ell}(x)x^{d_0}$ for $\ell\in [m]$, where $s(x)\in \mathcal{S}=\{g(x)\in \mathbb{F}_p[x]:\deg(g(x))<d_0\}$ and $\alpha_{\ell}(x)\in \mathcal{G}_{\ell}=\{g(x)\in \mathbb{F}_p[x]:\deg(g(x))<(\sum\limits_{i=1}^{t_{\ell}}d_i)-d_0\}$. As a result, the polynomial $f_{\ell}(x)$ is random and uniform in $\{f_{\ell}(x)\in \mathbb{F}_p[x]:\deg(f_{\ell}(x))<\sum\limits_{i=1}^{t_{\ell}}d_i\}$ when $s(x)\in \mathcal{S}$ and $\alpha_{\ell}(x)\in \mathcal{G}_{\ell}$ are random and uniform.

Assume that the subset $\mathcal{B}\subset \mathcal{P}$ is an unauthorized subset. Therefore,
            \[\mathcal{B}\notin \{\mathcal{A} \subseteq \mathcal{P}: \exists \ell \in [m]~such~that~|\mathcal{A}\cap(\bigcup_{w=1}^{\ell}\mathcal{P}_{w})|\geq t_{\ell}\},\]
namely,
    \[\mathcal{B}\cap(\cup_{w=1}^{\ell}\mathcal{P}_{w})|<t_{\ell}~\mathrm{for~all}~\ell\in [m].\]
Participants of $\mathcal{B}$ know their shares, the upper bounds of the degrees of $f_{\ell}(x)\in \mathbb{F}_p[x], \ell\in [m]$, and all publicly known polynomials and one-way hash functions. Namely, the dealer will guess the secret by first selecting $(g_1(x),g_2(x),\ldots,g_m(x)\in (\mathbb{F}_p[x])^m$ satisfying the following five conditions, and then computing $g_m(x)\pmod{m_0(x)}$.
\begin{itemize}
  \item[(i)] For all $\ell\in [m]$, it holds that $\deg(g_{\ell}(x))<\sum\limits_{i=1}^{t_{\ell}}d_i$.
  \item[(ii)] $g_{1}(x)\equiv g_{1}(x)\equiv\cdots\equiv g_{m}(x)\pmod{m_0(x)}$.
  \item[(iii)] It holds that
            \[g_{\ell}(x)\equiv (w_i^{(\ell)}(x)+H_{\ell}(s_i(x)))\pmod{m_i(x)} ~\mathrm{for~all}~\ell\in [m-1], i\in \mathcal{B}\cap [N_{\ell}],\]
            and $g_{m}(x)\equiv (w_i^{(m)}(x)+H_{m}(s_i(x)))\pmod{m_i(x)} ~\mathrm{for~all}~i\in \mathcal{B}\cap [N_{m-1}]$.
  Namely,
             \begin{displaymath}
         \left\{\begin{aligned}
            g_{1}(x)\equiv &(w_i^{(1)}(x)+H_{1}(s_i(x)))\pmod{m_i(x)}~\mathrm{for~all}~i\in \mathcal{B}\cap [N_1],\\
            g_{2}(x)\equiv &(w_i^{(2)}(x)+H_{2}(s_i(x)))\pmod{m_i(x)}~\mathrm{for~all}~i\in \mathcal{B}\cap [N_2],\\
                     &\vdots\\
            g_{m-1}(x)\equiv &(w_i^{(m-1)}(x)+H_{m-1}(s_i(x)))\pmod{m_i(x)}~\mathrm{for~all}~i\in \mathcal{B}\cap [N_{m-1}],\\
            g_{m}(x)\equiv &(w_i^{(m)}(x)+H_{m}(s_i(x)))\pmod{m_i(x)}~\mathrm{for~all}~i\in \mathcal{B}\cap [N_{m-1}].\\
           \end{aligned} \right.
        \end{displaymath}
  \item[(iv)] For $i\in \mathcal{B}\cap [N_{m-1}+1, N_m]$, it holds that $g_{m}(x)\equiv s_i(x)\pmod{m_i(x)}$.
  \item[(v)] For any given $i\in [N_{m-1}]$ and $i\notin \mathcal{B}$, there is a level $\mathcal{P}_{\ell_1}$ such that $i\in \mathcal{P}_{\ell_1}$. There is a polynomial $\widetilde{s}_i(x)\in \mathbb{F}_p[x]$ such that $\deg{(\widetilde{s}_i(x))}<d_i$, and $H_{\ell}(\widetilde{s}_i(x))=(g_{\ell}(x)-w_i^{(\ell)}(x))\pmod{m_i(x)}$ for all $\ell\in [\ell_1,m]$.
  \end{itemize}
Note that the conditions (iii) and (v) are used to make the polynomials $g_1(x),g_2(x),\ldots,\\g_m(x)$ satisfy publicly known polynomials $w_i^{(\ell)}(x)$.

\begin{lemma}\label{le: loss entropy}
Let $\mathcal{V}_{\mathcal{B}}$ and $\mathcal{V}^\prime_{\mathcal{B}}$ be the conditions (i) to (v) and (i) to (iv), respectively. For all $\epsilon_1>0$, there is a positive integer $\sigma_1$ such that for all $|\mathcal{S}|=p^{d_0}>\sigma_1$, it has that
                               \[0<\mathsf{H}(\mathbf{S}|\mathbf{V}^\prime_{\mathcal{B}})-\mathsf{H}(\mathbf{S}|\mathbf{V}_{\mathcal{B}})<\epsilon_1,\]
where $\mathbf{V}_{\mathcal{B}}$ and $\mathbf{V}^\prime_{\mathcal{B}}$ are random variables corresponding to $\mathcal{V}_{\mathcal{B}}$ and $\mathcal{V}^\prime_{\mathcal{B}}$, respectively.
\end{lemma}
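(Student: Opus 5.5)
We read the conditions in the following way. Given their knowledge, $\mathcal{B}$ guesses $s(x)$ as $g_m(x)\pmod{m_0(x)}$ for a tuple $(g_1(x),\ldots,g_m(x))$ that is consistent with what $\mathcal{B}$ sees. Under $\mathcal{V}^\prime_{\mathcal{B}}$ only conditions (i)--(iv) are imposed. Under $\mathcal{V}_{\mathcal{B}}$ condition (v) is added: for every non-member $i\in[N_{m-1}]$ of level $\mathcal{P}_{\ell_1}$ there must be a preimage $\widetilde{s}_i(x)$ whose hashes $H_\ell(\widetilde{s}_i(x))$, $\ell\in[\ell_1,m]$, match $(g_\ell(x)-w_i^{(\ell)}(x))\bmod m_i(x)$.

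The plan is to compare the two conditional distributions of $\mathbf{S}$ by counting consistent tuples per candidate secret.

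\textbf{Step 1 (the relaxed system).} We first show that $\mathsf{H}(\mathbf{S}|\mathbf{V}^\prime_{\mathcal{B}})$ is essentially $\mathsf{H}(\mathbf{S})$. Conditions (iii),(iv) constrain each $g_\ell$ only modulo $\prod_{i\in\mathcal{B}\cap[N_\ell]}m_i(x)$ (and $g_m$ also modulo the level-$m$ moduli in $\mathcal{B}$). Since $|\mathcal{B}\cap(\cup_{w\le\ell}\mathcal{P}_w)|\le t_\ell-1$, condition (iii) of the identities, $d_0+\sum_{i=n-t_\ell+2}^n d_i\le\sum_{i=1}^{t_\ell}d_i$, shows that $\deg\bigl(m_0\prod_{i\in\mathcal{B}^{(\ell)}}m_i\bigr)$ does not exceed the degree bound in (i). By Lemma \ref{le: CRT}, for every candidate $s^\ast(x)\in\mathcal{S}$ and every $\ell$, the congruences together with $g_\ell\equiv s^\ast\pmod{m_0(x)}$ have the same number of solutions of degree $<\sum_{i=1}^{t_\ell}d_i$, namely $p^{\sum_{i\le t_\ell}d_i-d_0-\sum_{i\in\mathcal{B}^{(\ell)}}d_i}$. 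Condition (ii) couples the $g_\ell$ only through the common residue mod $m_0$. Hence the number of tuples satisfying (i)--(iv) is the same for every $s^\ast$, so $\mathbf{S}$ given $\mathbf{V}^\prime_{\mathcal{B}}$ is uniform on $\mathcal{S}$ and $\mathsf{H}(\mathbf{S}|\mathbf{V}^\prime_{\mathcal{B}})=\log_2 p^{d_0}$.

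\textbf{Step 2 (cost of condition (v)).} We model the $h_\ell$ as random functions. This is the idealization implicit in "one-way" and is where computational security enters. For a fixed tuple and a fixed non-member $i$, condition (v) asks that some $\widetilde{s}_i$ among the $p^{d_i}$ candidates hit a prescribed target in $\mathbb{F}_p^{d_i(m-\ell_1+1)}$. For $\ell_1<m$ this happens with probability about $1-(1-p^{-d_i(m-\ell_1)})^{p^{d_i}}\cdot(\ldots)$. The count of tuples satisfying (i)--(v) for each $s^\ast$ is therefore the uniform count from Step 1 multiplied by a factor. The true secret always survives, since the true $c_i$ witnesses (v), and the other candidates survive at a rate determined by the $p^{-d_i}$ terms. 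So the posterior $\Pr[\mathbf{S}=s^\ast|\mathbf{V}_{\mathcal{B}}]$ is proportional to the Step 1 count times this survival factor. The true secret gets a boost of relative size at most $O(p^{-d_0})\cdot(\text{number of consistent tuples})^{-1}$, and the boost is strictly positive. That positivity gives the strict inequality $0<\cdot$.

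\textbf{Step 3 (entropy estimate).} Write the posterior as $(1+\delta_{s^\ast})/|\mathcal{S}|$ with $\max|\delta|\to0$ as $p^{d_0}\to\infty$. Using $-\sum q\log q\ge\log|\mathcal{S}|-O(\max\delta^2)$, or bounding the KL divergence to uniform by $\log_2(1+\max\delta)$, we obtain $\mathsf{H}(\mathbf{S}|\mathbf{V}^\prime_{\mathcal{B}})-\mathsf{H}(\mathbf{S}|\mathbf{V}_{\mathcal{B}})\le\log_2(1+\max\delta)$. This is below $\epsilon_1$ once $|\mathcal{S}|>\sigma_1$ for a suitable $\sigma_1$.

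The main obstacle is Step 2. We must make rigorous that condition (v) distorts the posterior only by a factor $1+o(1)$ uniformly in $s^\ast$. This needs a random-oracle treatment of $h_\ell$ and control of the dependence between different $\ell$ through the shared preimage $\widetilde{s}_i$. The first approach to try is to bound the surviving fraction of tuples for each $s^\ast$ between $(1-p^{-d_0})^{n}$ and $1$, using $d_i\ge d_0$, and then feed these bounds into Step 3.
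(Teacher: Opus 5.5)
\textbf{Step 2 cannot be completed, because its premise is false.} Your Step 1 is fine: it is the count behind Lemmas~\ref{le: preimage} and~\ref{le: cardinality of F} and the entropy computation in Theorem~\ref{Th: our perfectness}. You also correctly identify Step 2 as the crux.

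Take a non-member $i\in\mathcal{P}_{\ell_1}$ with $\ell_1\le m-1$. Condition (v) requires a vector in $\mathbb{F}_p^{d_i(m-\ell_1+1)}$ to lie in the image of $\widetilde{s}_i\mapsto(H_\ell(\widetilde{s}_i))_{\ell\in[\ell_1,m]}$, which has at most $p^{d_i}$ points. So (v) keeps only about a $p^{-d_i(m-\ell_1)}$ fraction of the tuples allowed by (i)--(iv), not a $1-O(p^{-d_0})$ fraction. Your per-candidate hit probability should be $p^{-d_i(m-\ell_1+1)}$, and the proposed lower bound $(1-p^{-d_0})^n$ on the surviving fraction is wrong.

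A shrinking factor common to all candidate secrets would be harmless. But the genuine tuple always survives, so the posterior stays near uniform only if the number of surviving spurious tuples per candidate secret is large. Heuristically that number is $p^{\theta-\sum_{i\in[N_{m-1}]\setminus\mathcal{B}}d_i(m-\ell_1(i))}$, with $\theta$ as in Lemma~\ref{le: preimage}. Nothing in the scheme's parameter conditions makes this large.

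\textbf{It fails badly in a concrete case.} Take $m=2$, $d_0=d_1=\cdots=d_5=1$, $t_1=1$, $t_2=2$, $n_1=3$, $n_2=2$, $m_0(x)=x$, $m_i(x)=x-a_i$ with distinct nonzero $a_i$, and $\mathcal{B}=\emptyset$. All identity conditions hold.
\begin{itemize}
\item Here $f_1=s$ and $f_2=s+\alpha x$. For $i\in\{1,2,3\}$ the public values are $w_i^{(1)}=s-h_1(c_i)$ and $w_i^{(2)}=s+\alpha a_i-h_2(c_i)$.
\item Condition (v) for $i$ says $(\tilde s-w_i^{(1)},\,\tilde s+\tilde\alpha a_i-w_i^{(2)})\in\{(h_1(b),h_2(b)):b\in\mathbb{F}_p\}$. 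This is an affine bijection of $\mathbb{F}_p^2$ in $(\tilde s,\tilde\alpha)$, so each $i$ confines the candidate pair to at most $p$ of the $p^2$ points.
\item A wrong candidate passes a given $i$ with probability $O(1/p)$, since only $b\neq c_i$ can work. Modelling $h_1,h_2$ as random functions, only $O(p^{2}\cdot p^{-3})=O(p^{-1})$ wrong pairs survive all three constraints in expectation.
\item Hence the public data alone determine $s$ with probability $1-O(1/p)$. So $\mathsf{H}(\mathbf{S}|\mathbf{V}_{\mathcal{B}})\to 0$ while $\mathsf{H}(\mathbf{S}|\mathbf{V}'_{\mathcal{B}})=\log_2 p$, and the gap grows instead of tending to $0$.
\end{itemize}
Read as an entropy bound, the statement is therefore false in general. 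One-wayness only limits efficient adversaries and is invisible to Shannon entropy, so no random-oracle refinement of your Steps 2--3 can rescue it.

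The paper's own proof is only the bare assertion that (v) removes tuples ``with a negligible probability'', which is exactly what fails here. A valid argument would need one of two things:
\begin{itemize}
\item extra parameter constraints forcing the exponent above to be large for every unauthorized $\mathcal{B}$, together with a genuine concentration estimate; or
\item a computational security notion in place of the entropy statement.
\end{itemize}
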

\begin{proof}
Recall that the shares $s_i(x), i\in [N_{m-1}]$ are randomly selected by the dealer and functions $h_{\ell}(\cdot), \ell\in [m]$ are distinct one-way hash functions. This shows that the condition (v) can eliminate a group of polynomials $(g_1(x),g_2(x),\ldots,g_m(x))\in (\mathbb{F}_p[x])^m$ with a negligible probability when $|\mathcal{S}|$ is big enough. Therefore, $\mathsf{H}(\mathbf{S}|\mathbf{V}^\prime_{\mathcal{B}})-\mathsf{H}(\mathbf{S}|\mathbf{V}_{\mathcal{B}})$ is negligible when $|\mathcal{S}|$ is big enough. This gives the proof.
\end{proof}

Denote by
\begin{equation}\label{Eq: definiton F}
            \mathcal{F}=\{(g_1(x),g_2(x),\ldots,g_m(x)\in (\mathbb{F}_p[x])^m: ~\mathrm{the~conditions~(i)~to~(iv)~}\mathrm{are~satisfied}.\}
       \end{equation}
For any given $s(x)\in \mathcal{S}$, we will compute how many possible $(g_1(x),g_2(x),\ldots,g_m(x))\\\in \mathcal{F}$ such that $s(x)=g_{m}(x)\pmod{m_0(x)}$ in Lemma \ref{le: preimage}. After that, we will determine the cardinality of $\mathcal{F}$ in Lemma \ref{le: cardinality of F}. Based on Lemmas \ref{le: loss entropy}, \ref{le: preimage} and \ref{le: cardinality of F}, we determine the conditional entropy $\mathsf{H}(\mathbf{S}|\mathbf{V}^\prime_{\mathcal{B}})$ and give the proof of the asymptotic perfectness of our scheme in Theorem \ref{Th: our perfectness}.

\begin{lemma}\label{le: preimage}
Define the mapping $\Phi$ by
 \[\Phi: \mathcal{F}\mapsto \mathcal{S}, (g_1(x),g_2(x),\ldots,g_m(x))\mapsto g_m(x)\pmod{m_0(x)}.\]\
 For any $s(x)\in \mathcal{S}$, let
           \[\Phi^{-1}(s(x))=\{(g_1(x),g_2(x),\ldots,g_m(x))\in \mathcal{F}: g_{m}(x)\equiv s(x)\pmod{m_0(x)}\},\]
then the cardinality of the set $\Phi^{-1}(s(x))$ is $|\Phi^{-1}(s(x))|=p^{\theta}$, where
            \[\theta=\sum_{\ell=1}^{m}\left(\sum\limits_{i=1}^{t_\ell}d_i-\sum\limits_{i\in \mathcal{B}\cap [N_{\ell}]}d_i-d_0\right)\geq 0.\]
\end{lemma}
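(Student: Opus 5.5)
Fix $s(x)\in\mathcal{S}$. The plan is to count $\Phi^{-1}(s(x))$ one level at a time. Condition (ii), with the intended meaning $g_1\equiv g_2\equiv\cdots\equiv g_m\pmod{m_0(x)}$, together with $g_m\equiv s\pmod{m_0(x)}$ forces $g_\ell(x)\equiv s(x)\pmod{x^{d_0}}$ for every $\ell\in[m]$.

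Apart from this common residue, the conditions (i), (iii), (iv) involve each $g_\ell$ separately:
\begin{itemize}
\item[(a)] $g_\ell$ is constrained only by its own degree bound from (i);
\item[(b)] for $\ell\le m-1$, it must satisfy its congruences modulo $m_i(x)$, $i\in\mathcal{B}\cap[N_\ell]$, from (iii);
\item[(c)] for $\ell=m$, it must satisfy the congruences modulo $m_i(x)$ for $i\in\mathcal{B}\cap[N_{m-1}]$ from (iii) and for $i\in\mathcal{B}\cap[N_{m-1}+1,N_m]$ from (iv). Together these cover $i\in\mathcal{B}\cap[N_m]$.
\end{itemize}
Hence $\Phi^{-1}(s(x))$ is a Cartesian product $\prod_{\ell=1}^m T_\ell$. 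Here
\[
T_\ell=\Bigl\{g\in\mathbb{F}_p[x]:\deg g<D_\ell,\ g\equiv s\pmod{m_0},\ g\equiv r_i^{(\ell)}\pmod{m_i}\ \text{for } i\in\mathcal{B}\cap[N_\ell]\Bigr\},
\]
with $D_\ell=\sum_{i=1}^{t_\ell}d_i$ and $r_i^{(\ell)}$ the prescribed residues. It then suffices to show $|T_\ell|=p^{D_\ell-d_0-\sum_{i\in\mathcal{B}\cap[N_\ell]}d_i}$ and multiply over $\ell$.

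To count $T_\ell$, I would use Lemma \ref{le: CRT} with the pairwise coprime moduli $m_0(x)$ and $m_i(x)$, $i\in\mathcal{B}\cap[N_\ell]$. Coprimality comes from condition (i) of the identities together with the pairwise coprimality of the $m_i$. Put $M_\ell(x)=x^{d_0}\prod_{i\in\mathcal{B}\cap[N_\ell]}m_i(x)$, so that $\deg M_\ell=d_0+\sum_{i\in\mathcal{B}\cap[N_\ell]}d_i$. The CRT gives a unique $g^{*}$ with $\deg g^{*}<\deg M_\ell$ meeting all the congruences. Every solution is then $g^{*}+q(x)M_\ell(x)$, and the requirement $\deg<D_\ell$ holds exactly when $\deg q<D_\ell-\deg M_\ell$. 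This gives precisely $p^{D_\ell-\deg M_\ell}$ choices, provided $\deg M_\ell\le D_\ell$, since then $g^{*}$ automatically satisfies the degree bound.

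The main step, and the only real obstacle, is the inequality $d_0+\sum_{i\in\mathcal{B}\cap[N_\ell]}d_i\le\sum_{i=1}^{t_\ell}d_i$. It gives both $\theta\ge0$ and the nonemptiness of each $T_\ell$. I would derive it from unauthorizedness and condition (iii) of the identities. Since $|\mathcal{B}\cap[N_\ell]|\le t_\ell-1$ and the degrees are nondecreasing by (ii), the sum $\sum_{i\in\mathcal{B}\cap[N_\ell]}d_i$ is at most the sum of the $t_\ell-1$ largest degrees, namely $\sum_{i=n-t_\ell+2}^{n}d_i$. Condition (iii) then gives $d_0+\sum_{i=n-t_\ell+2}^{n}d_i\le\sum_{i=1}^{t_\ell}d_i$, which is what is needed.

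Multiplying the counts over $\ell\in[m]$ gives $|\Phi^{-1}(s(x))|=p^{\theta}$ with $\theta\ge0$ as stated. The count does not depend on $s(x)$, which is what the later entropy computation will use.
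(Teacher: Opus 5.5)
Your proposal is correct and follows essentially the same route as the paper: fix $s(x)$, note that the constraints decouple level by level, apply the CRT to the moduli $m_0(x)$ and $m_i(x)$ with $i\in\mathcal{B}\cap[N_\ell]$, write every solution as the canonical residue plus a multiple $k_\ell(x)M(x)$, count the admissible $k_\ell$ as $p^{\theta_\ell}$, and multiply over $\ell$. The one thing you add is an explicit justification of $\theta_\ell\ge 0$, using $|\mathcal{B}\cap[N_\ell]|\le t_\ell-1$, monotonicity of the $d_i$, and identity condition (iii); the paper only asserts this inequality, although the exact count $p^{\theta_\ell}$ relies on it.
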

\begin{proof}
For any $s(x)\in \mathcal{S}$, and $(g_1(x),g_2(x),\ldots,g_m(x))\in \Phi^{-1}(s(x))$, it holds that
   \begin{displaymath}
         \left\{\begin{aligned}
            g_{1}(x)\equiv &g_{1}(x)\equiv\cdots\equiv g_{m}(x)\equiv s(x)\pmod{m_0(x)},\\
            g_{1}(x)\equiv &(w_i^{(1)}(x)+H_{1}(s_i(x)))\pmod{m_i(x)}~\mathrm{for~all}~i\in \mathcal{B}\cap [N_1],\\
            g_{2}(x)\equiv &(w_i^{(2)}(x)+H_{2}(s_i(x)))\pmod{m_i(x)}~\mathrm{for~all}~i\in \mathcal{B}\cap [N_2],\\
                     &\vdots\\
            g_{m-1}(x)\equiv &(w_i^{(m-1)}(x)+H_{m-1}(s_i(x)))\pmod{m_i(x)}~\mathrm{for~all}~i\in \mathcal{B}\cap [N_{m-1}],\\
            g_{m}(x)\equiv &(w_i^{(m)}(x)+H_{m}(s_i(x)))\pmod{m_i(x)}~\mathrm{for~all}~i\in \mathcal{B}\cap [N_{m-1}],\\
            g_{m}(x)\equiv &s_i(x)\pmod{m_i(x)}~\mathrm{for~all}~i\in \mathcal{B}\cap [N_{m-1}+1, n].\\
           \end{aligned} \right.
        \end{displaymath}
Namely,
       \begin{displaymath}
         \left\{\begin{aligned}
            g_{1}(x)\equiv &g_{1}(x)\equiv\cdots\equiv g_{m}(x)\equiv s(x)\pmod{m_0(x)},\\
            g_{1}(x)\equiv &s_i^{(1)}(x)\pmod{m_i(x)}~\mathrm{for~all}~i\in \mathcal{B}\cap [N_1],\\
            g_{2}(x)\equiv &s_i^{(2)}(x)\pmod{m_i(x)}~\mathrm{for~all}~i\in \mathcal{B}\cap [N_2],\\
                     &\vdots\\
            g_{m-1}(x)\equiv &s_i^{(m-1)}(x)\pmod{m_i(x)}~\mathrm{for~all}~i\in \mathcal{B}\cap [N_{m-1}],\\
            g_{m}(x)\equiv &s_i^{(m)}(x)\pmod{m_i(x)}~\mathrm{for~all}~i\in \mathcal{B}\cap [N_m].\\
           \end{aligned} \right.
       \end{displaymath}
where
  \[s_i^{(\ell)}(x)=\begin{cases}
            H_{\ell}(s_i(x))+w_i^{(\ell)}(x),~\mathrm{if}~\ell\in [m-1], i\in \mathcal{B}\cap [N_{\ell}],\\
            H_{m}(s_i(x))+w_i^{(m)}(x),~\mathrm{if}~\ell=m, i\in \mathcal{B}\cap [N_{m-1}],\\
            s_i(x),~\mathrm{if}~\ell=m, i\in \mathcal{B}\cap [N_{m-1}+1,N_m].\\
         \end{cases}\]
From Lemma \ref{le: CRT} it holds that
    \[g_{\ell}(x)\equiv\sum\limits_{i\in \mathcal{\widetilde{B}}^{(\ell)}}\lambda_{i,\mathcal{\widetilde{B}}^{(\ell)}}(x)M_{i,\mathcal{\widetilde{B}}^{(\ell)}}(x)s_i^{(\ell)}(x)\pmod {M_{\mathcal{\widetilde{B}}^{(\ell)}}(x)},\]
for $\ell\in [m]$, where $\mathcal{\widetilde{B}}^{(\ell)}=\{0\}\cup(\mathcal{B}\cap [N_{\ell}])$, $s_0^{(\ell)}(x)=s(x)$,
$M_{\mathcal{\widetilde{B}}^{(\ell)}}(x)=\prod\limits_{i\in\mathcal{\widetilde{B}}^{(\ell)}}m_i(x)$, $M_{i,\mathcal{\widetilde{B}}^{(\ell)}}(x)=M_{\mathcal{\widetilde{B}}^{(\ell)}}(x)/m_i(x)$, and $\lambda_{i,\mathcal{\widetilde{B}}^{(\ell)}}(x)\equiv M_{i,\mathcal{\widetilde{B}}^{(\ell)}}^{-1}(x)\pmod {m_i(x)}$.

Denote by $g_{\ell,\mathcal{\widetilde{B}}^{(\ell)}}(x)=\sum\limits_{i\in \mathcal{\widetilde{B}}^{(\ell)}}\lambda_{i,\mathcal{\widetilde{B}}^{(\ell)}}(x)M_{i,\mathcal{\widetilde{B}}^{(\ell)}}(x)s_i^{(\ell)}(x)\pmod {M_{\mathcal{\widetilde{B}}^{(\ell)}}(x)}$, then
    \[\deg(g_{\ell,\mathcal{\widetilde{B}}^{(\ell)}}(x))<\deg(M_{\mathcal{\widetilde{B}}^{(\ell)}}(x))=\sum\limits_{i\in \mathcal{\widetilde{B}}^{(\ell)}}d_i,\]
and
      \begin{equation}\label{Eq: preimage F}
     g_{\ell}(x)\equiv g_{\ell,\mathcal{\widetilde{B}}^{(\ell)}}(x)\pmod{M_{\mathcal{\widetilde{B}}^{(\ell)}}(x)}.
     \end{equation}
Therefore, there is a polynomial $k_{\ell,\mathcal{\widetilde{B}}^{(\ell)}}(x)\in \mathbb{F}_p[x]$ such that
     \[g_{\ell}(x)=g_{\ell,\mathcal{\widetilde{B}}^{(\ell)}}(x)+k_{\ell,\mathcal{\widetilde{B}}^{(\ell)}}(x) M_{\mathcal{\widetilde{B}}^{(\ell)}}(x).\]

On the one hand, it holds that $\deg(g_{\ell}(x))<\sum\limits_{i=1}^{t_{\ell}}d_i$ from the definition of $\mathcal{F}$ in Expression (\ref{Eq: definiton F}). Since $\deg(g_{\ell,\mathcal{\widetilde{B}}^{(\ell)}}(x))<\deg(M_{\mathcal{\widetilde{B}}}(x))$, we get that
        \begin{equation}
        \deg(k_{\ell,\mathcal{\widetilde{B}}^{(\ell)}}(x))<\sum\limits_{i=1}^{t_\ell}d_i-\sum\limits_{i\in \mathcal{\widetilde{B}}^{(\ell)}}d_i.
        \end{equation}
This implies that there are $p^{\theta_{\ell}}$ different choices for $k_{\ell,\mathcal{\widetilde{B}}^{(\ell)}}(x)$, where
     \[\theta_{\ell}=\sum\limits_{i=1}^{t_\ell}d_i-\sum\limits_{i\in \mathcal{\widetilde{B}}^{(\ell)}}d_i=\sum\limits_{i=1}^{t_\ell}d_i-\sum\limits_{i\in \mathcal{B}\cap [N_{\ell}]}d_i-d_0\geq 0.\]
On the other hand, different choices for $k_{\ell,\mathcal{\widetilde{B}}^{(\ell)}}(x)\in \mathbb{F}_p[x]$ with $\deg(k_{\ell,\mathcal{\widetilde{B}}^{(\ell)}}(x))<\theta_{\ell}$ correspond to different $g_{\ell}(x)\in \mathbb{F}_p[x]$ satisfying $\deg(g_{\ell}(x))<\sum\limits_{i=1}^{t_{\ell}}d_i$ and Expression (\ref{Eq: preimage F}), i.e., $(g_1(x),g_2(x),\ldots,g_m(x))\in \mathcal{F}$. Consequently, the cardinality of $\Phi^{-1}(s(x))$ is
        \[|\Phi^{-1}(s(x))|=\prod_{\ell=1}^{m}p^{\theta_{\ell}}=p^{\sum_{i=1}^{m}\theta_{\ell}}=p^{\theta},\]
where $\theta=\sum_{\ell=1}^{m}\left(\sum\limits_{i=1}^{t_\ell}d_i-\sum\limits_{i\in \mathcal{B}\cap [N_{\ell}]}d_i-d_0\right)\geq 0$.
\end{proof}

\begin{lemma}\label{le: cardinality of F}
 The cardinality of $\mathcal{F}$ is $|\mathcal{F}|=p^{\theta+d_0}$.
\end{lemma}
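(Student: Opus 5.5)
The plan is to partition $\mathcal{F}$ by the fibres of the map $\Phi$ of Lemma \ref{le: preimage} and then count. Every element of $\mathcal{F}$ is sent by $\Phi$ to some $s(x)\in\mathcal{S}$, so
\[\mathcal{F}=\bigsqcup_{s(x)\in\mathcal{S}}\Phi^{-1}(s(x)),\]
and the union is disjoint because $\Phi$ is a function. It therefore suffices to know $|\Phi^{-1}(s(x))|$ for each $s(x)\in\mathcal{S}$ and to sum over $\mathcal{S}$.

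\textbf{Step 1: every fibre has the same size.} By Lemma \ref{le: preimage}, $|\Phi^{-1}(s(x))|=p^{\theta}$ for every $s(x)\in\mathcal{S}$. The count in that lemma needs two facts, and I would check both explicitly.
\begin{itemize}
\item[(a)] The exponent $\theta$ does not depend on $s(x)$. This is immediate, since $\theta$ involves only the degrees $d_i$, the thresholds $t_\ell$ and the index set $\mathcal{B}$.
\item[(b)] The fibre is nonempty. The moduli $m_0(x)$ and $m_i(x)$, $i\in\mathcal{B}\cap[N_\ell]$, are pairwise coprime by condition (i) of the Identities step. So for each $\ell$, Lemma \ref{le: CRT} gives a residue $g_{\ell,\widetilde{\mathcal{B}}^{(\ell)}}(x)$ that agrees with $s(x)$ modulo $m_0(x)$ and with the prescribed values $s_i^{(\ell)}(x)$. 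The lifts $g_\ell=g_{\ell,\widetilde{\mathcal{B}}^{(\ell)}}+k_\ell M_{\widetilde{\mathcal{B}}^{(\ell)}}$ then satisfy the degree bound (i) exactly when $\deg k_\ell<\theta_\ell$.
\end{itemize}
One point needs care in (b): $\theta_\ell\ge 0$ must hold for every $\ell$, not only for the sum $\theta$. This follows from $|\mathcal{B}\cap[N_\ell]|\le t_\ell-1$ together with condition (iii) and the ordering $d_0\le d_1\le\cdots\le d_n$. The $t_\ell-1$ largest degrees plus $d_0$ are at most $\sum_{i=1}^{t_\ell}d_i$, so $d_0+\sum_{i\in\mathcal{B}\cap[N_\ell]}d_i\le\sum_{i=1}^{t_\ell}d_i$.

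\textbf{Step 2: sum the fibres.} Adding up the fibre sizes gives
\[|\mathcal{F}|=\sum_{s(x)\in\mathcal{S}}|\Phi^{-1}(s(x))|=|\mathcal{S}|\cdot p^{\theta}=p^{d_0}\cdot p^{\theta}=p^{\theta+d_0},\]
using $|\mathcal{S}|=p^{d_0}$, which is the number of polynomials over $\mathbb{F}_p$ of degree less than $d_0$.

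\textbf{Main obstacle.} The only delicate point is the uniformity and nonemptiness in Step 1. One must confirm that the constraints on different $g_\ell$ are coupled only through the common residue $s(x)$ modulo $m_0(x)$, as condition (ii) requires. Once $s(x)$ is fixed, the $g_\ell$ are chosen independently, so the product count $\prod_\ell p^{\theta_\ell}$ from Lemma \ref{le: preimage} applies uniformly to every $s(x)$. Given Lemma \ref{le: preimage}, the rest of the argument is a direct counting identity.
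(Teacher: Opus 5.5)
Your proof is correct and matches the paper's argument: you partition $\mathcal{F}$ into the disjoint fibres $\Phi^{-1}(s(x))$, each of size $p^{\theta}$ by Lemma~\ref{le: preimage}, and multiply by $|\mathcal{S}|=p^{d_0}$. You also derive $\theta_\ell\ge 0$ for each level from condition (iii) and $|\mathcal{B}\cap[N_\ell]|\le t_\ell-1$, which fills in a step the paper only asserts.
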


\begin{proof}
From the proof of Lemma \ref{le: preimage}, we get that for any $(g_1(x),g_2(x),\ldots,g_m(x))\in \mathcal{F}$, the following system of congruences are satisfied.
          \begin{displaymath}
         \left\{\begin{aligned}
            g_{1}(x)\equiv &g_{1}(x)\equiv\cdots\equiv g_{m}(x)\pmod{m_0(x)},\\
            g_{1}(x)\equiv &s_i^{(1)}(x)\pmod{m_i(x)}~\mathrm{for~all}~i\in \mathcal{B}\cap [N_1],\\
            g_{2}(x)\equiv &s_i^{(2)}(x)\pmod{m_i(x)}~\mathrm{for~all}~i\in \mathcal{B}\cap [N_2],\\
                     &\vdots\\
            g_{m-1}(x)\equiv &s_i^{(m-1)}(x)\pmod{m_i(x)}~\mathrm{for~all}~i\in \mathcal{B}\cap [N_{m-1}],\\
            g_{m}(x)\equiv &s_i^{(m)}(x)\pmod{m_i(x)}~\mathrm{for~all}~i\in \mathcal{B}\cap [N_m].\\
           \end{aligned} \right.
       \end{displaymath}
where
  \[s_i^{(\ell)}(x)=\begin{cases}
            H_{\ell}(s_i(x))+w_i^{(\ell)}(x),~\mathrm{if}~\ell\in [m-1], i\in \mathcal{B}\cap [N_{\ell}],\\
            H_{m}(s_i(x))+w_i^{(m)}(x),~\mathrm{if}~\ell=m, i\in \mathcal{B}\cap [N_{m-1}],\\
            s_i(x),~\mathrm{if}~\ell=m, i\in \mathcal{B}\cap [N_{m-1}+1,N_m].\\
         \end{cases}\]

Assume that $s(x)\in \mathcal{S}=\{g(x)\in \mathbb{F}_p[x]:\deg(g(x))<d_0\}$. From Lemma \ref{le: preimage}, there are exactly $p^{\theta}$ different choices for $(g_1(x),g_2(x),\ldots,g_m(x))\in \mathcal{F}$ such that
            \[g_{1}(x)\equiv g_{1}(x)\equiv\cdots\equiv g_{m}(x)\equiv s(x)\pmod{m_0(x)}.\]
Moreover, different $s(x)\in \mathcal{S}$ corresponds different $(g_1(x),g_2(x),\ldots,g_m(x))\in \mathcal{F}$. Since the cardinality of $\mathcal{S}$ is $p^{d_0}$, then the cardinality of $\mathcal{F}$ is $|\mathcal{F}|=p^{d_0}\cdot p^{\theta}=p^{\theta+d_0}$.
\end{proof}

\begin{theorem}[Asymptotic perfectness]\label{Th: our perfectness}
   Our scheme is asymptotically perfect.
\end{theorem}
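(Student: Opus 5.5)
We prove the theorem by combining Lemmas \ref{le: loss entropy}, \ref{le: preimage} and \ref{le: cardinality of F}. First we compute $\mathsf{H}(\mathbf{S}|\mathbf{V}^\prime_{\mathcal{B}})$ exactly. Then we transfer the result to $\mathsf{H}(\mathbf{S}|\mathbf{V}_{\mathcal{B}})$, which is the quantity in the definition of asymptotic perfectness (Definition \ref{def: Asymptotically Ideal DHSS}). Throughout, $s(x)$ and the $\alpha_\ell(x)$ are uniform, so $\mathsf{H}(\mathbf{S})=\log_2|\mathcal{S}|=d_0\log_2 p$.

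\textbf{Step 1: exact conditional entropy given (i)--(iv).} The knowledge of $\mathcal{B}$ described by $\mathcal{V}^\prime_{\mathcal{B}}$ is equivalent to knowing that the true tuple $(f_1(x),\ldots,f_m(x))$ lies in $\mathcal{F}$. Since $f_1,\ldots,f_m$ are uniform in their degree-bounded spaces and share the common residue $s(x)$ modulo $m_0(x)$, every tuple of $\mathcal{F}$ is equally likely. By Lemma \ref{le: preimage}, each fibre $\Phi^{-1}(s(x))$ has the same size $p^{\theta}$, independent of $s(x)$. By Lemma \ref{le: cardinality of F}, $|\mathcal{F}|=p^{\theta+d_0}$. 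Hence
\[\Pr[\mathbf{S}=s(x)\mid \mathbf{V}^\prime_{\mathcal{B}}]=\frac{p^{\theta}}{p^{\theta+d_0}}=p^{-d_0}\quad\text{for every } s(x)\in\mathcal{S},\]
and therefore $\mathsf{H}(\mathbf{S}|\mathbf{V}^\prime_{\mathcal{B}})=d_0\log_2 p=\mathsf{H}(\mathbf{S})$. The key fact used here is that $\theta\ge 0$ for every unauthorized $\mathcal{B}$, which is stated in Lemma \ref{le: preimage}. It rests on $|\mathcal{B}\cap[N_\ell]|\le t_\ell-1$ together with condition (iii) on the degrees.

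\textbf{Step 2: passing to the full knowledge (i)--(v).} Now write
\[\Delta(|\mathcal{S}|)=\mathsf{H}(\mathbf{S})-\mathsf{H}(\mathbf{S}|\mathbf{V}_{\mathcal{B}})=\mathsf{H}(\mathbf{S}|\mathbf{V}^\prime_{\mathcal{B}})-\mathsf{H}(\mathbf{S}|\mathbf{V}_{\mathcal{B}}).\]
Given $\epsilon_1>0$, Lemma \ref{le: loss entropy} supplies $\sigma_1$ such that $\Delta(|\mathcal{S}|)<\epsilon_1$ whenever $|\mathcal{S}|=p^{d_0}>\sigma_1$. The bound is uniform over the finitely many unauthorized $\mathcal{B}$: take $\sigma_1$ to be the maximum of the thresholds obtained for each $\mathcal{B}$. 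Finally $\mathsf{H}(\mathbf{S})=d_0\log_2 p\neq 0$, so the nondegeneracy requirement of the definition holds. This establishes asymptotic perfectness.

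\textbf{Main obstacle.} The delicate point is Step 1's claim that conditioning on (i)--(iv) induces the uniform distribution on $\mathcal{F}$. The shares of $\mathcal{B}$ and the public values $w_i^{(\ell)}(x)$ are functions of the true $f_\ell(x)$ together with the independent random $c_i(x)$. One must check that, given these observed values, the posterior on $(f_1,\ldots,f_m)$ is proportional to the indicator of $\mathcal{F}$. The argument is as follows. For $i\notin\mathcal{B}$, each $w_i^{(\ell)}(x)$ is masked by the hash $H_\ell(c_i(x))$ of an unknown uniformly random $c_i(x)$. If this mask is treated as uniform, those $w_i^{(\ell)}(x)$ carry no information about the $f_\ell(x)$. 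Any residual dependence, coming from the fact that the masks for different $\ell$ share the same preimage $c_i(x)$, is exactly condition (v). That dependence is absorbed into Lemma \ref{le: loss entropy}, which is why the scheme is only asymptotically perfect and only computationally secure.
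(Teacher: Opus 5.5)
Your proposal follows the paper's proof step for step. You compute $\Pr(\mathbf{S}=s(x)\mid\mathbf{V}'_{\mathcal{B}})=p^{\theta}/p^{\theta+d_0}$ exactly from Lemmas \ref{le: preimage} and \ref{le: cardinality of F}, then appeal to Lemma \ref{le: loss entropy} to pass from $\mathbf{V}'_{\mathcal{B}}$ to $\mathbf{V}_{\mathcal{B}}$. Step 1 is fine, since it conditions only on the residues of the $f_\ell(x)$ modulo the $m_i(x)$ with $i\in\mathcal{B}$. The gap is in Step 2, exactly at the point you single out as the main obstacle, and your way around it does not work. For $i\notin\mathcal{B}$ in a level $\ell_1\le m-1$, the masks $H_\ell(c_i(x))$, $\ell\in[\ell_1,m]$, are $m-\ell_1+1\ge 2$ functions of the same $d_i$ unknown coefficients. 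Jointly, therefore, they are far from independent uniform masks. Condition (v) confines $(g_\ell(x)\bmod m_i(x))_{\ell\in[\ell_1,m]}$ to at most $p^{d_i}$ of the $p^{d_i(m-\ell_1+1)}$ possible values. That is a strong restriction on $\mathcal{F}$, not one that removes tuples ``with negligible probability''. Moreover, one-wayness is a computational property and cannot bound a Shannon-entropy difference such as $\mathsf{H}(\mathbf{S}|\mathbf{V}'_{\mathcal{B}})-\mathsf{H}(\mathbf{S}|\mathbf{V}_{\mathcal{B}})$.

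In fact, heuristically, the information-theoretic claim fails in general. Take $m=2$, $d_0=d_i=1$ (so $m_i(x)=x-a_i$), $\mathcal{B}=\varnothing$ and $n_1\ge t_1+2t_2-1$. Then $|\mathcal{F}|=p^{t_1+t_2-1}$, and a wrong pair $(g_1,g_2)$ agrees with $(f_1,f_2)$ at no more than $t_2-1$ of the points $a_i$. At every other level-1 participant $i$, condition (v) requires the pair $(g_1(a_i)-w_i^{(1)},g_2(a_i)-w_i^{(2)})$, which differs from $(h_1(c_i),h_2(c_i))$, to equal $(h_1(c),h_2(c))$ for some $c\in\mathbb{F}_p$. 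Treating $h_1,h_2$ as random functions, this happens with probability $O(1/p)$. A first-moment count then bounds the expected number of spurious survivors by $O(p^{t_1+2t_2-2-n_1})=O(p^{-1})$. So the public values alone determine $s(x)$ with probability $1-O(p^{-1})$. By Fano's inequality, $\mathsf{H}(\mathbf{S}|\mathbf{V}_{\varnothing})\to 0$ while $\mathsf{H}(\mathbf{S})=\log_2 p\to\infty$.

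Since the paper's proof rests on the same Lemma \ref{le: loss entropy}, your argument is faithful to it. Neither, however, establishes asymptotic perfectness in the sense of Definition \ref{def: Asymptotically Ideal DHSS}. What one could hope to prove is a computational security statement, which needs a different definition and a reduction to concrete properties of the hash functions.
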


\begin{proof}
For any $s(x)\in \mathcal{S}$ and $\mathcal{B}\notin \Gamma$, by Lemmas \ref{le: cardinality of F} and \ref{le: preimage} we get the conditional probability
\[\mathsf{Pr}(\mathbf{S}=s(x)|\mathbf{V}^\prime_\mathcal{B}=\mathcal{V}^\prime_{\mathcal{B}})=\frac{|\Phi^{-1}(s(x))|}{|\mathcal{F}|}=\frac{p^{\theta}}{p^{\theta+d_0}}=\frac{1}{p^{d_0}}.\]
Therefore,
\begin{displaymath}
             \begin{aligned}
              &\mathsf{H}(\textbf{S}|\textbf{V}^\prime_\mathcal{B})\\
                    =&-\sum_{\mbox{\tiny$\begin{array}{c}
                                       \mathcal{B}\subset[n],\\
                                       \mathcal{B}\notin \Gamma\\
                                       \end{array}$}}
                      \sum_{s(x)\in \mathcal{S}}\mathsf{Pr}(\textbf{V}^\prime_\mathcal{B}=\mathcal{V}^\prime_{\mathcal{B}})
                      \mathsf{Pr}(\mathbf{S}=s(x)|\textbf{V}^\prime_\mathcal{B}=\mathcal{V}^\prime_{\mathcal{B}})\log_2 \mathsf{Pr}(\mathbf{S}=s(x)|\textbf{V}^\prime_\mathcal{B}=\mathcal{V}^\prime_{\mathcal{B}})\\
                    =&\sum_{\mbox{\tiny$\begin{array}{c}
                                       \mathcal{B}\subset[n],\\
                                       \mathcal{B}\notin\Gamma\\
                                       \end{array}$}}
                      \sum_{s(x)\in \mathcal{S}}\mathsf{Pr}(\textbf{V}^\prime_\mathcal{B}=\mathcal{V}^\prime_{\mathcal{B}})\frac{1}{p^{d_0}}\log_2 p^{d_0}\\
                     =&\sum_{\mbox{\tiny$\begin{array}{c}
                                       \mathcal{B}\subset[n],\\
                                       \mathcal{B}\notin\Gamma\\
                                       \end{array}$}}
                      \mathsf{Pr}(\textbf{V}^\prime_\mathcal{B}=\mathcal{V}^\prime_{\mathcal{B}})\log_2 p^{d_0}\\
                    =&\log_2 p^{d_0}.
            \end{aligned}
          \end{displaymath}
Moreover, the secret in our scheme is random and uniform, which means that $\mathsf{H}(\textbf{S})=\log_2|\mathcal{S}|=\log_2 p^{d_0}= \mathsf{H}(\mathbf{S}|\mathbf{V}^\prime_\mathcal{B})$ for any $\mathcal{B}\notin \Gamma$. According to Lemma \ref{le: loss entropy}, for all $\epsilon_1>0$, there is a positive integer $\sigma_1$ such that for all $|\mathcal{S}|=p^{d_0}>\sigma_1$, it holds that
                               \[0<\mathsf{H}(\mathbf{S})-\mathsf{H}(\mathbf{S}|\mathbf{V}_{\mathcal{B}})
                               =\mathsf{H}(\mathbf{S}|\mathbf{V}'_{\mathcal{B}})-\mathsf{H}(\mathbf{S}|\mathbf{V}_{\mathcal{B}})<\epsilon_1.\]
As a result, our scheme is asymptotically perfect by Definition \ref{def: Asymptotically Ideal DHSS}.
\end{proof}

\begin{theorem}\label{Theorem: summary}
Let $\mathcal{P}$ be a set of $n$ participants, and it is partitioned into $m$ disjoint subsets $\mathcal{P}_1, \mathcal{P}_2,\ldots, \mathcal{P}_m$. For a threshold sequence $t_1, t_2, \ldots, t_m$ such that $1\leq t_1< t_2<\cdots<t_m\leq n$ and $t_{\ell}\leq |\mathcal{P}_{\ell}|$ for $\ell\in [m]$, our scheme is a secure and asymptotically perfect DHSS scheme. Furthermore, our scheme is an asymptotically ideal DHSS scheme with an information one when $d_0=d_1=\cdots=d_n$.
\end{theorem}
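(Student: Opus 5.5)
The statement is a summary theorem, so I will assemble it from results already established and add only the information-rate computation. There are three parts: correctness, asymptotic perfectness (with computational security), and the asymptotic information rate when all degrees coincide.

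\emph{Correctness and asymptotic perfectness.} Correctness is exactly Theorem \ref{Theorem:Correctness of our scheme}. For any authorized $\mathcal{A}$, pick $\ell$ with $|\mathcal{A}^{(\ell)}|\geq t_\ell$. The degree bound $\deg f_\ell(x)<\sum_{i=1}^{t_\ell}d_i\leq\sum_{i\in\mathcal{A}^{(\ell)}}d_i$ then makes the CRT solution of Lemma \ref{le: CRT} unique, and reducing it mod $m_0(x)$ returns $s(x)$. Asymptotic perfectness is Theorem \ref{Th: our perfectness}, which rests on three earlier results. Lemmas \ref{le: preimage} and \ref{le: cardinality of F} show that, under conditions (i)--(iv), every secret has the same number $p^{\theta}$ of consistent tuples; hence $\mathsf{H}(\mathbf{S}|\mathbf{V}'_{\mathcal{B}})=\mathsf{H}(\mathbf{S})$. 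Lemma \ref{le: loss entropy} then bounds the extra loss caused by condition (v) by $\epsilon_1$ once $p^{d_0}$ is large. The word ``secure'' should be read computationally. The published values $w_i^{(\ell)}(x)$ mask $f_\ell(x)$ only through $H_\ell(s_i(x))$. An outsider who does not hold $s_i(x)$ can relate different levels $\ell$ only by inverting the one-way $h_\ell$ or finding collisions. This is precisely what blocks the Step 1--Step 2 attack of Section \ref{Sec: Security analysis of Yang}. I will state this explicitly: the differences $w_i^{(\ell)}-w_i^{(\ell')}$ involve $H_\ell(c_i)-H_{\ell'}(c_i)$ with $c_i$ secret and uniform, so the CRT recovery of $f_{\ell'}-f_\ell$ used in that attack is not available.

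\emph{Information rate.} The secret space has size $|\mathcal{S}|=p^{d_0}$. The share space of participant $i$ is the set of polynomials of degree $<d_i$, of size $p^{d_i}$. This holds both for the random $c_i(x)$, $i\in[N_{m-1}]$, and for the residues $f_m(x)\bmod m_i(x)$. Both the secret and the shares are uniform: $s(x)$ is chosen uniformly, the $c_i$ are chosen uniformly, and $f_m$ is uniform of degree $<\sum_{i=1}^{t_m}d_i$, so its residue modulo a single $m_i$ is uniform. Definition \ref{def: Information rate} therefore gives
\[
\rho=\frac{d_0\log_2 p}{\max_{i\in[n]} d_i\log_2 p}=\frac{d_0}{d_n}.
\]
When $d_0=d_1=\cdots=d_n$, this gives $\rho=1$. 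Consequently $\max_i\mathsf{H}(\mathbf{S}_i)/\mathsf{H}(\mathbf{S})=1\leq 1+\epsilon_2$ for every $\epsilon_2>0$, which is condition 2) of Definition \ref{def: Asymptotically Ideal DHSS}. We must also check that condition (iii) on the moduli is satisfiable with equal degrees. It reduces to $d_0(1+t_\ell-1)\leq t_\ell d_0$, which holds with equality. The pairwise-coprimality requirements can be met by taking distinct irreducible polynomials of degree $d_0$ not divisible by $x$; the table's requirement $p>n$ makes this possible even for $d_0=1$, using linear factors $x-a_i$ with distinct nonzero $a_i$. Combining this with the previous paragraph yields asymptotic ideality.

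\emph{Main obstacle.} The delicate point is the security claim: Lemma \ref{le: loss entropy} is only heuristic, relying on one-wayness and on the $h_\ell$ being distinct. I would present the theorem's security as conditional on that lemma, viewing the $h_\ell$ as random-oracle-like functions. Under that view, condition (v) eliminates a candidate tuple only with probability about $p^{-d_i}$ per constraint, which vanishes as $|\mathcal{S}|\to\infty$. Everything else is bookkeeping of degrees.
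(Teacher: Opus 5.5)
Your route is the paper's own: correctness from Theorem~\ref{Theorem:Correctness of our scheme}, asymptotic perfectness from Theorem~\ref{Th: our perfectness} via Lemmas~\ref{le: loss entropy}--\ref{le: cardinality of F}, and the rate $\rho=d_0/d_n$. The rate computation and your checks of condition (iii) and of coprimality are fine.

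The genuine gap is the perfectness step. It rests entirely on Lemma~\ref{le: loss entropy}, and your random-oracle justification of that lemma has the event inverted. Take an outsider $i\in[N_{m-1}]\setminus\mathcal{B}$ lying in level $\ell_1<m$. Condition (v) requires the tuple $\bigl((g_\ell(x)-w_i^{(\ell)}(x))\bmod m_i(x)\bigr)_{\ell\in[\ell_1,m]}$ to lie in the image of $\widetilde{s}_i\mapsto(H_\ell(\widetilde{s}_i))_{\ell\in[\ell_1,m]}$. The tuple ranges over $p^{d_i(m-\ell_1+1)}$ values, but the image has at most $p^{d_i}$ elements. So a generic tuple of $\mathcal{F}$ \emph{survives} (v) with probability at most about $p^{-d_i(m-\ell_1)}$. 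Condition (v) therefore discards almost all of $\mathcal{F}$, not a negligible part of it. The reason is that the single mask $c_i(x)$, which carries only $d_i$ symbols of entropy, is reused for $m-\ell_1+1$ published residues. One-wayness of $h_\ell$ is a computational property and cannot restore Shannon entropy. Besides, each $h_\ell$ acts on a single element of $\mathbb{F}_p$, so it can be tabulated unless $p$ itself is cryptographically large, not merely $p>n$.

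This already fails in the smallest case. Take $\mathcal{P}_1=\{1\}$, $\mathcal{P}_2=\{2,3\}$, $t_1=1$, $t_2=2$, $d_0=d_1=d_2=d_3=d$, and the unauthorized set $\mathcal{B}=\{2\}$. Then $\mathcal{G}_1=\{0\}$, so $f_1=s$. The set $\mathcal{B}$ sees $v=f_2\bmod m_2$, $w_1^{(1)}=s-H_1(c_1)$ and $w_1^{(2)}=u-H_2(c_1)$, where $u=f_2\bmod m_1$.
\begin{itemize}
\item The triple $(u,v,c_1)$ is uniform on $\mathbb{F}_p^{3d}$.
\item $s=\phi(u,v)$, where $\phi(u,v)=\mathrm{CRT}(u,v)\bmod x^d$ is linear and $\mathrm{CRT}(u,v)$ is the unique polynomial of degree $<2d$ with residues $u,v$ modulo $m_1,m_2$.
\item $v,w_1^{(2)},c_1$ are independent and uniform.
\item $w_1^{(1)}=\phi(w_1^{(2)},v)+\psi(c_1)$ with $\psi(c)=\phi(H_2(c),0)-H_1(c)$.
\end{itemize}
Since both $s$ and the view are functions of $(u,v,c_1)$, writing $\mathbf{C}_1$ for the random variable of $c_1$ gives
\[
\mathsf{H}(\mathbf{S}|\mathbf{V}_{\mathcal{B}})\leq 3d\log_2 p-\mathsf{H}(\mathbf{V}_{\mathcal{B}})=d\log_2 p-\mathsf{H}(\psi(\mathbf{C}_1)).
\]
So the loss $\Delta(|\mathcal{S}|)$ is at least $\mathsf{H}(\psi(\mathbf{C}_1))$. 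This entropy is small only if $H_1(c)=\phi(H_2(c),0)+\mathrm{const}$ for almost all $c$, which is a linear relation between the two hash functions. For distinct random-looking $h_1,h_2$ it is of order $d\log_2 p$; already for $d=1$ it is about $\log_2 p$. The loss therefore grows with $|\mathcal{S}|$ instead of vanishing.

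Hence asymptotic perfectness in the entropy sense of Definition~\ref{def: Asymptotically Ideal DHSS} cannot be obtained from Lemmas~\ref{le: loss entropy}--\ref{le: cardinality of F}, and for generic hash functions it actually fails. This flaw is inherited from the paper, whose proof also just cites Theorem~\ref{Th: our perfectness}. At best you could aim for a computational statement, such as hardness of solving $\psi(c)=y$ and its multi-level analogues. That needs a different security definition and an actual reduction, not entropy bookkeeping.
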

\begin{proof}
According to Definition \ref{def: Multilevel secret sharing}, Theorem \ref{Theorem:Correctness of our scheme} and Theorem \ref{Th: our perfectness}, we get that our scheme is a secure and asymptotically perfect DHSS scheme. Moreover, it is known that
         \[s_i(x)=\begin{cases}
            c_i(x),\mathrm{if}~i\in [N_{m-1}],\\
            f_m(x) \pmod {m_i(x)},\mathrm{if}~i\in [N_{m-1}+1,N_m],\\
         \end{cases}\]
where $c_{i}(x)=c_{i,0}+c_{i,1}x+\cdots+c_{i,d_i-1}x^{d_i-1}\in \mathbb{F}_p[x], i\in [N_{m-1}]$, $f_{m}(x)=s(x)+\alpha_{m}(x)x^{d_0}$, and
$\alpha_{m}(x)\in \mathcal{G}_{m} =\{g(x)\in \mathbb{F}_p[x]:\deg(g(x))<(\sum\limits_{i=1}^{t_{m}}d_i)-d_0\}$. When $d_0=d_1=\cdots=d_n$, it is easy to check that every share space is the same as the secret space $\mathcal{S}=\{g(x)\in \mathbb{F}_p[x]:\deg(g(x))<d_0\}$, which implies that the information rate is $\rho=1$. According to Definition \ref{def: Asymptotically Ideal DHSS}, our scheme is an asymptotically ideal DHSS scheme with an information one when $d_0=d_1=\cdots=d_n$.
\end{proof}

Compared with the unconditionally secure and ideal DHSS schemes of \cite{Brickell1989,Tassa2007,Chenqi2022}, our scheme is a computationally secure and asymptotically ideal DHSS scheme, but it permits distinct share sizes among participants. Specifically, let $d_0=d_1=\cdots=d_n$, then our scheme's share size is $d_0\log_2 p$ bits for $d_0\geq 1$ and $p>n$, which is smaller than that of \cite{Brickell1989,Tassa2007,Chenqi2022} (see Table 1). Compared with the DHSS schemes of \cite{Harn-Miao2014,Oguzhan-Ersoy2016,Tiplea2021,Yangjing2024}, our scheme is secure and asymptotically ideal at the same time.

\section{Conclusion}\label{Sec: conclusion}
In this work, we give an attack method on the ideal DHSS scheme of Yang et al. \cite{Yangjing2024} and the asymptotically ideal DHSS scheme of Tiplea et al. \cite{Tiplea2021}. We further propose a novel asymptotically ideal DHSS scheme. Compared with previous CRT-based DHSS scheme, our scheme is a secure and asymptotically ideal DHSS scheme with an information rate one. How to reduce the number of public values is our future work.


\begin{thebibliography}{99}

\bibitem{Shamir1979} A. Shamir, How to share a secret, Communications of the ACM, vol.22, no.11, pp.612-613, 1979.

\bibitem{Blakley1979} G. R. Blakley, Safeguarding cryptographic keys, in: 1979 International Workshop on Managing Requirements Knowledge, MARK, New York, NY, USA, pp. 313-318. IEEE, 1979.

\bibitem{Simmons} G. J. Simmons, How to (really) share a secret, in: Advances in Cryptology-Crypto'88, Santa Barbara, California, USA. Lecture Notes in Computer Science, vol.403, pp.390-448. Springer, 1988.

\bibitem{Brickell1989} E. F. Brickell, Some ideal secret sharing schemes, in: Advances in Cryptology-Eurocrypt'89, Houthalen, Belgium. Lecture Notes in Computer Science, vol.434, pp.468-475. Springer, 1989.

\bibitem{Tassa2007} T. Tassa, Hierarchical threshold secret sharing, Journal of Cryptology, vol.20, no.2, pp.237-264, 2007.

\bibitem{Chenqi2022} Q. Chen, C. Tang, and Z. Lin, Efficient explicit constructions of multipartite secret sharing schemes, IEEE Transactions on Information Theory, vol.68, no.1, pp.601-631, 2022.

\bibitem{Harn-Miao2014} L. Harn, and F. MIAO, Multilevel threshold secret sharing based on the Chinese remainder theorem, Information Processing Letters, vol.114, no.9, 2014. 504-509.

\bibitem{Oguzhan-Ersoy2016} O. Ersoy, K. Kaya, and K. Kaskaloglu, Multilevel Threshold Secret and Function Sharing based on the Chinese Remainder Theorem, arXiv: 1605.07988, https://arxiv.org/abs/1605.07988.

\bibitem{Tiplea2021}F. L. Tiplea, and C. C. Dragan, Asymptotically ideal Chinese remainder theorem-based secret sharing schemes for multilevel and compartmented
access structures, IET Information Security, vol. 15, no. 4, pp. 282-296, 2021.

\bibitem{Yangjing2024} J. Yang, S.-T. Xia, X. Wang, J. Yuan, and F.-W. Fu, A perfect ideal hierarchical secret sharing scheme based on the CRT for Polynomial Rings, in: 2024 IEEE International Symposium on Information Theory (ISIT), Athens, Greece, pp.321-326. IEEE, 2024.

\bibitem{Ning2018} Y. Ning, F. Miao, W. Huang, K. Meng, Y. Xiong, and X. Wang, Constructing ideal secret sharing schemes based on Chinese Remainder Theorem, in: Advances in Cryptology - {ASIACRYPT} 2018, Brisbane, QLD, Australia. Lecture Notes in Computer Science, vol. 11274, pp. 310-331. Springer, 2018.

\bibitem{Ding2023} J. Ding, P. Ke, C. Lin, and H. Wang, Ramp scheme based on CRT for polynomial ring over finite field, Journal of Systems Science and Complexity, vol. 36, no.1, pp. 129-150, 2023.

\bibitem{RM1985} R. C. Merkle. One Way Hash Functions and DES. Advances in Cryptology,CRYPTO'89, LNCS435,PP. 428-466,1985．

\bibitem{Quisquater2002} M. Quisquater, B. Preneel, and J. Vandewalle, On the security of the threshold scheme based on the Chinese remainder theorem, in: Public Key Cryptography, PKC 2002. Lecture Notes in Computer Science, vol. 2274, pp.199-210. Springer, Berlin, Heidelberg, 2002.





























\end{thebibliography}
\end{document}